\nc{\pl}{{\scalebox{0.7}{+}}}
\nc{\PSD}{\HERM_{\pl}}
\nc{\PD}{\HERM_{\pl\pl}}
\nc{\polarPSD}[1]{{#1}_{\pl}^{\circ}}
\nc{\polarPSDre}[1]{{#1}_{\pl}^{\star}}
\nc{\polarPD}[1]{{#1}_{\pl\pl}^{\circ}}
\nc{\HERM}{\mathscr{H}}
\nc{\cvxset}{\mathscr{C}}
\nc{\density}{\mathscr{D}}
\nc{\subdensity}{\mathscr{D}_\bullet}
\nc{\bcC}{\cC}
\nc{\Meas}{{\scriptscriptstyle \rm M}}
\nc{\Proj}{{{\scriptscriptstyle \rm P}}}
\nc{\RM}{{{\mathscr{R}}}}
\nc{\sK}{{{\mathscr{K}}}}
\nc{\sS}{{{\mathscr{S}}}}
\nc{\sT}{{{\mathscr{T}}}}
\nc{\sA}{{{\mathscr{A}}}}
\nc{\sB}{{{\mathscr{B}}}}
\nc{\sC}{{{\mathscr{C}}}}
\nc{\sE}{{{\mathscr{E}}}}
\nc{\sL}{{{\mathscr{L}}}}
\nc{\sG}{{{\mathscr{G}}}}
\nc{\sF}{{{\mathscr{F}}}}
\nc{\sI}{{{\mathscr{I}}}}
\nc{\sN}{{{\mathscr{N}}}}
\nc{\sM}{{{\mathscr{M}}}}
\nc{\END}{\operatorname{End}}
\nc{\PERM}{\mathfrak{\sigma}}
\nc{\Cone}{\text{\rm Cone}}
\nc{\sep}{{\SEP}}
\nc{\BS}{{\scriptscriptstyle \rm {BS}}}
\nc{\Sand}{{\scriptscriptstyle  \rm S}}
\nc{\Petz}{{\scriptscriptstyle  \rm P}}
\nc{\Hypo}{{\scriptscriptstyle  \rm H}}
\nc{\DD}{{{\mathbb D}}}
\nc{\suchthat}{\text{\rm s.t.}}
\nc{\PPT}{\text{\rm PPT}}
\nc{\Rains}{\text{\rm Rains}}
\nc{\WD}{\text{\rm WD}}
\nc{\new}{\text{\rm new}}
\nc{\sfT}{\mathsf T}
\nc{\SEP}{\text{\rm SEP}}
\nc{\PSEP}{\text{\rm PSEP}}
\nc{\CPTP}{\text{\rm CPTP}}
\nc{\POVM}{\text{\rm POVM}}
\nc{\PVM}{\text{\rm PVM}}
\nc{\CP}{\text{\rm CP}}
\nc{\adv}{\text{\rm adv}}
\nc{\spec}{\text{\rm spec}}
\nc{\poly}{\text{\rm poly}}
\nc{\End}{\operatorname{End}}
\nc{\Par}{\operatorname{Par}}
\nc{\RNG}{\operatorname{RNG}}
\nc{\epi}{\boldsymbol{\operatorname{epi}}}
\nc{\op}{\boldsymbol{\operatorname{op}}}
\nc{\db}[1]{\left\llbracket#1 \right\rrbracket}
\nc{\img}{\mathbf{i}}
\begin{document}

\title{\Large \textbf{Efficient approximation of regularized relative entropies\\ and applications}}

\author[1]{Kun Fang \thanks{kunfang@cuhk.edu.cn}}
\author[2]{Hamza Fawzi \thanks{h.fawzi@damtp.cam.ac.uk}}
\author[3]{Omar Fawzi \thanks{omar.fawzi@ens-lyon.fr}}

\affil[1]{\small School of Data Science, The Chinese University of Hong Kong, Shenzhen,\protect\\  Guangdong, 518172, China}
\affil[2]{\small Department of Applied Mathematics and Theoretical Physics,  University of Cambridge, \protect\\ Cambridge CB3 0WA, United Kingdom}
\affil[3]{\small Univ Lyon, Inria, ENS Lyon, UCBL, LIP, Lyon, France}

\date{}

\maketitle

\begin{abstract}
The quantum relative entropy is a fundamental quantity in quantum information science, characterizing the distinguishability between two quantum states. However, this quantity is not additive in general for correlated quantum states, necessitating regularization for precise characterization of the operational tasks of interest. Recently, we proposed the study of the regularized relative entropy between two sequences of sets of quantum states in [arXiv: 2411.04035], which captures a general framework for a wide range of quantum information tasks. Here, we show that given suitable structural assumptions and efficient descriptions of the sets, the regularized relative entropy can be efficiently approximated within an additive error by a quantum relative entropy program of polynomial size. This applies in particular to the regularized relative entropy in adversarial quantum channel discrimination. Moreover, we apply the idea of efficient approximation to quantum resource theories. In particular, when the set of interest does not directly satisfy the required structural assumptions, it can be relaxed to one that does. This provides improved and efficient bounds for the entanglement cost of quantum states and channels, entanglement distillation and magic state distillation. Numerical results demonstrate improvements even for the first level of approximation.
\end{abstract}


{
\hypersetup{linkcolor=black}
\tableofcontents
}

\section{Introduction}

The classical relative entropy, also known as the Kullback-Leibler (KL) divergence~\cite{kullback1951information}, is a measure of how much a model probability distribution is different from a true probability distribution. It plays a pivotal role in classical information processing and finds applications in diverse domains including machine learning~\cite{murphy2012machine}, data compression~\cite{cover1999elements,gray2011entropy} and statistical mechanics~\cite{jaynes1957information,crooks1999entropy}. With the development of quantum information science, the quantum relative entropy has been proposed as a quantum generalization of the KL divergence~\cite{umegaki1962conditional},
\begin{align}
    D(\rho\|\sigma) := \tr [\rho (\log \rho - \log \sigma)]
\end{align}
quantifying the distinguishability between quantum states $\rho$ and $\sigma$~\cite{hiai1991proper}. It has found widespread applications in various fields~\cite{schumacher2002relative,vedral2002role}, including quantum machine learning~\cite{biamonte2017quantum}, quantum channel coding~\cite{hayashi2017quantum}, quantum error correction~\cite{cerf1997information}, quantum resource theories~\cite{chitambar2019quantum}, and quantum cryptography~\cite{pirandola2020advances}.

A useful property for quantum relative entropy is the additivity between two tensor product states, i.e., $D(\rho_1\ox \rho_2\|\sigma_1\ox \sigma_2) = D(\rho_1\|\sigma_1) + D(\rho_2\|\sigma_2)$. However, this property does not hold for general correlated quantum states, i.e., $D(\rho_{12}\|\sigma_{12}) \neq D(\rho_1\|\sigma_1) + D(\rho_2\|\sigma_2)$, necessitating regularization for precise characterization of operational tasks of interest. Recently, we proposed in~\cite{fang2024generalized} the study of the regularized quantum relative entropy between two sequences of sets of quantum states $\sA_n$, $\sB_n$ acting on $\cH^{\otimes n}$ for some Hilbert space $\cH$: 
\begin{align}
    D^{\infty}(\sA\|\sB) := \lim_{n\to \infty}\frac{1}{n}D(\sA_n\|\sB_n),
\end{align} 
where $D(\sA_n\|\sB_n) = \inf_{\rho \in \sA_n, \sigma \in \sB_n} D(\rho \| \sigma)$. This quantity captures a general framework for a wide range of quantum information tasks. This includes quantum hypothesis testing~\cite{hiai1991proper,Ogawa2000,Brand_o_2010,brandao2020adversarial,berta2021composite,Mosonyi_2022,hayashi2024generalized,lami2024solutiongeneralisedquantumsteins} and quantum channel discrimination \cite{brandao2020adversarial,fang2020chain,fang2021ultimate,fang2025adversarial} from a foundational perspective, entropy accumulation theorems for quantum cryptography~\cite{dupuis2020entropy,metger2022generalised}, and quantum resource distillation \cite{Brand_o_2010,rains2001semidefinite,fang2019non,regula2019one,Bravyi_2005,veitch2014resource,fang2020no,fang2022no,Wang2018magicstates,hayashi2021finite,diaz2018using} and preparation \cite{Wang2017,Wang2017irreversibility,wang2023computable,lami2023no}, which are crucial for quantum computing and quantum networking~\cite{fang2023quantum}.

In general, computing $D^{\infty}(\sA\|\sB)$ is challenging due to the limit, which we refer to as ``regularization''. One notable example is given by the regularized relative entropy of entanglement 
\begin{align}
    D^\reg(\rho_{AB}\|\SEP):= \lim_{n\to \infty} \frac{1}{n} D(\rho_{AB}^{\ox n}\|\SEP(A^{n}:B^n)),
\end{align}
where $\SEP(A^n:B^n)$ denotes the set of all separable states between Hilbert spaces $\cH_A^{\ox n}$ and $\cH_B^{\ox n}$. This quantity uniquely determines the ultimate limits of entanglement manipulation and serves as the key quantity in understanding the second law of quantum entanglement~\cite{Brand_o_2010,hayashi2024generalized,lami2024solutiongeneralisedquantumsteins}. However, evaluating this quantity is extremly hard in general, as it involves the regularization as well as the separability problem~\cite{gurvits2003classical}. 

\paragraph{Our results} In this work, we show in Theorem~\ref{thm: efficient relative entropy program} that given suitable structural assumptions (in particular the stability of the polar set under tensor product) and efficient descriptions of the sets, the regularized relative entropy $D^{\infty}(\sA\|\sB)$ can be efficiently approximated within an additive error by a quantum relative entropy program of polynomial size. We then apply this result in Section~\ref{sec: applications} to several problems in quantum information theory. The first application (Section~\ref{sec: Adversarial quantum channel discrimination}) is to compute the regularized relative entropy between the image sets of two quantum channels, which characterizes the optimal exponent in adversarial channel discrimination~\cite{fang2025adversarial}. For this problem, the relevant sets satisfy the structural assumptions. In the following applications, this will not be the case, but the sets can be relaxed to sets that do satisfy the requirements.
For instance, in entanglement theory, the set of separable states can be relaxed to the Rains set~\cite{rains2001semidefinite,audenaert2002asymptotic}, which satisfies all necessary assumptions. We illustrate this by obtaining bounds (Section~\ref{sec: Entanglement cost for quantum states and channels}) on the entanglement cost of quantum states and channels improving on~\cite{Wang2017,Wang2017irreversibility,wang2023computable,lami2023no}.  Numerical results demonstrate improvements even for the first level of approximation. This approach can also be applied to obtain improved bounds on entanglement distillation~\cite{rains2001semidefinite,audenaert2002asymptotic,Brand_o_2010} as discussed in Section~\ref{sec: Quantum entanglement distillation}. Similarly, in fault-tolerant quantum computing, the set of stabilizer states can be relaxed to the set of states with non-positive mana~\cite{veitch2014resource}, which also fulfills the required conditions. As described in Section~\ref{sec: Magic state distillation}, this can be used to obtain improved bounds for magic state distillation.

Generally, our result can be applied by verifying the conditions of the relevant theory and performing necessary relaxations when required. Therefore, we anticipate that this approach has the potential for other applications beyond the specific cases discussed here.

\section{Preliminaries}
\label{sec: preliminaries}

\subsection{Notations}

In this section we set the notations and define several quantities that will be used throughout this work. Some frequently used notations are summarized in Table~\ref{tab: state version}. Note that we label different physical systems by capital Latin letters and use these labels as subscripts to guide the reader by indicating which system a mathematical object belongs to. We drop the subscripts when they are evident in the context of an expression (or if we are not talking about a specific system).

\setlength\extrarowheight{2pt}
\begin{table}[H]
\centering
\begin{tabular}{l|l}
\toprule[2pt]
Notations & Descriptions\\
\hline
$\cH_A$ & Hilbert space on system $A$\\
$\sL(A)$ & Linear operators on $\cH_A$\\
$\HERM(A)$ & Hermitian operators on $\cH_A$\\
$\HERM_{\pl}(A)$ & Positive semidefinite operators on $\cH_A$\\
$\HERM_{\pl\pl}(A)$ & Positive definite operators on $\cH_A$\\
$\density(A)$ & Density matrices on $\cH_A$\\
$\sA,\sB,\sC$ & Set of linear operators\\
$\cvxset^{\circ}$ & Polar set $\cvxset^\circ := \{X: \tr[XY] \leq 1,  \forall\, Y\in \cvxset\}$ of $\cvxset$\\
$\polarPSD{\cvxset}$ & Polar set restricted to positive semidefinite cone $\cvxset^\circ \cap \PSD$\\
$\polarPD{\cvxset}$ & Polar set restricted to positive definite operators $\cvxset^\circ \cap \PD$\\
$\CPTP$ & Completely positive and trace preserving maps\\
$\CP$ & Completely positive maps\\
$\log(x)$ & Logarithm of $x$ in base two\\
\bottomrule[2pt]  
\end{tabular}
\caption{\small Overview of notational conventions.}
\label{tab: state version}
\end{table}

\subsection{Polar set and support function}

In the following, we introduce the definitions of the polar set and support function, along with a fundamental result that will be used in our discussions.

\begin{definition}
Let $\cvxset \subseteq \sE$ be a convex set in some Euclidean space $\sE$. Its polar set is defined by
\begin{align}
\cvxset^\circ:= \{X \in \sE: \tr[XY] \leq 1,  \forall\, Y\in \cvxset\}.
\end{align} 
Let $\polarPSD{\cvxset}:= \cvxset^\circ \cap \PSD$ and  $\polarPD{\cvxset}:= \cvxset^\circ \cap \PD$ be the intersections with positive semidefinite operators and positive definite operators, respectively.
The support function of $\cvxset$ at $\omega$ is defined by $h_{\cvxset}(\omega)= \sup_{\sigma \in \cvxset} \tr[\sigma \omega]$. 
\end{definition}

It is clear from the definitions that $\omega \in {\cvxset^\circ}$ if and only if $h_{\cvxset}(\omega) \leq 1$.

\begin{definition}\label{def: closed under tensor product}
    Let $\cH_1$ and $\cH_2$ be finite-dimensional Hilbert spaces. Consider three sets $\sA_1 \subseteq \PSD(\cH_1)$, $\sA_2 \subseteq \PSD(\cH_2)$, and $\sA_{12} \subseteq \PSD(\cH_1 \otimes \cH_2)$. We call $\{\sA_1,\sA_2,\sA_{12}\}$ is {closed under tensor product} if for any $X_1 \in \sA_1$, $X_2 \in \sA_2$, we have $X_1 \otimes X_2 \in \sA_{12}$. In short, we write $\sA_1 \ox \sA_2 \subseteq \sA_{12}$.
\end{definition}

The following lemma provides an equivalent condition for determining if the polar sets of interest are closed under tensor product, which can be easier to validate for specific examples.

\begin{lemma}\cite[Lemma 7]{fang2024generalized}\label{lema: polar set and support function}
Let $\cH_1$ and $\cH_2$ be finite-dimensional Hilbert spaces. Consider three sets $\sA_1 \subseteq \PSD(\cH_1)$, $\sA_2 \subseteq \PSD(\cH_2)$, and $\sA_{12} \subseteq \PSD(\cH_1 \otimes \cH_2)$. Their polar sets are closed under tensor product if and only if their support functions are sub-multiplicative. That is,
\begin{align}\label{eq: polar set and support function tmp1}
\polarPSD{(\sA_{1})} \ox \polarPSD{(\sA_{2})} \subseteq \polarPSD{(\sA_{12})} \iff h_{\sA_{12}}(X_1 \ox X_2) \leq h_{\sA_1}(X_1) h_{\sA_2}(X_2),\; \forall X_i \in \PSD(\cH_i).
\end{align}
\end{lemma}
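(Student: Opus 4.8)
The plan is to translate the set-theoretic containment on the left-hand side into the language of support functions, using the elementary fact recorded right after the definition of the polar set: for a convex set $\cvxset$ and $\omega \in \PSD$ one has $\omega \in \polarPSD{\cvxset}$ if and only if $h_{\cvxset}(\omega) \le 1$. Combined with the positive homogeneity $h_{\cvxset}(c\,\omega) = c\, h_{\cvxset}(\omega)$ for $c \ge 0$, which is immediate from $h_{\cvxset}(\omega) = \sup_{\sigma \in \cvxset}\tr[\sigma\omega]$, this gives the working description $\polarPSD{\cvxset} = \{X \in \PSD : h_{\cvxset}(X) \le 1\}$. Since $X_1, X_2 \ge 0$ forces $X_1 \ox X_2 \ge 0$, the left-hand containment $\polarPSD{(\sA_1)} \ox \polarPSD{(\sA_2)} \subseteq \polarPSD{(\sA_{12})}$ says precisely: whenever $h_{\sA_1}(X_1) \le 1$ and $h_{\sA_2}(X_2) \le 1$ with $X_i \in \PSD(\cH_i)$, then $h_{\sA_{12}}(X_1 \ox X_2) \le 1$.

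For the implication $\Leftarrow$ I would simply substitute: given $X_1 \in \polarPSD{(\sA_1)}$ and $X_2 \in \polarPSD{(\sA_2)}$ we have $h_{\sA_1}(X_1) \le 1$ and $h_{\sA_2}(X_2) \le 1$, so submultiplicativity yields $h_{\sA_{12}}(X_1 \ox X_2) \le h_{\sA_1}(X_1)\,h_{\sA_2}(X_2) \le 1$, i.e. $X_1 \ox X_2 \in \polarPSD{(\sA_{12})}$. For the converse $\Rightarrow$ I would fix arbitrary $X_1 \in \PSD(\cH_1)$, $X_2 \in \PSD(\cH_2)$ and exploit that both sides of the target inequality are positively homogeneous of degree one in each argument. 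In the generic case $a := h_{\sA_1}(X_1)$ and $b := h_{\sA_2}(X_2)$ lie in $(0,\infty)$; then $X_1/a$ and $X_2/b$ have support-function value $1$, hence lie in $\polarPSD{(\sA_1)}$ and $\polarPSD{(\sA_2)}$, so by hypothesis their tensor product lies in $\polarPSD{(\sA_{12})}$. Unwinding via homogeneity, $h_{\sA_{12}}\bigl((X_1 \ox X_2)/(ab)\bigr) \le 1$ becomes exactly $h_{\sA_{12}}(X_1 \ox X_2) \le ab = h_{\sA_1}(X_1)\,h_{\sA_2}(X_2)$.

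The step I expect to require the most care is not the generic normalization but the degenerate cases, where some support-function value equals $0$ or $+\infty$ (the latter can occur if the $\sA_i$ are unbounded). First note that $h_{\sA}(X) \ge 0$ for every $X \in \PSD$ and nonempty $\sA \subseteq \PSD$, since $\tr[\sigma X] \ge 0$ for positive semidefinite $\sigma, X$; thus all values lie in $[0,+\infty]$. When $a = +\infty$ and $b > 0$ the claimed inequality holds trivially. The genuinely delicate situation is a vanishing value, say $a = 0$: here I would argue by scaling, using that $h_{\sA_1}(\lambda X_1) = \lambda a = 0 \le 1$, so $\lambda X_1 \in \polarPSD{(\sA_1)}$ for every $\lambda > 0$. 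Pairing with $X_2' := X_2/b \in \polarPSD{(\sA_2)}$ (when $0 < b < \infty$) and invoking the hypothesis gives $\lambda\, h_{\sA_{12}}(X_1 \ox X_2)/b \le 1$ for all $\lambda > 0$, which forces $h_{\sA_{12}}(X_1 \ox X_2) = 0$, matching the right-hand side $ab = 0$. The remaining boundary combinations with $b \in \{0, +\infty\}$ are handled by the same scaling idea together with the usual conventions for $0 \cdot \infty$, and I would dispatch them at the end as routine checks.
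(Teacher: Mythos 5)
Your normalization argument is the standard proof of this equivalence, and it is the natural one here; note that the paper itself does not reprove the lemma but imports it from \cite{fang2024generalized}, so there is no in-paper proof to diverge from. The core is correct: the characterization $\polarPSD{\cvxset} = \{X \in \PSD : h_{\cvxset}(X) \leq 1\}$, the substitution argument for $\Leftarrow$, the rescaling $X_i \mapsto X_i / h_{\sA_i}(X_i)$ for the generic $\Rightarrow$ direction, and the one-parameter scaling limit for a vanishing support-function value are all sound.

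The one place where your write-up is too quick is the closing sentence dispatching the mixed combinations involving both $0$ and $+\infty$ as ``routine checks with the usual conventions for $0\cdot\infty$''. Under the most common such convention, $0\cdot\infty = 0$, the forward implication is actually \emph{false} for unbounded sets: take $\cH_1 = \cH_2 = \CC$, $\sA_1 = [0,\infty)$, $\sA_2 = \{0\}$, $\sA_{12} = [0,\infty)$. Then $\polarPSD{(\sA_1)} \ox \polarPSD{(\sA_2)} = \{0\}\cdot[0,\infty) = \{0\} \subseteq \polarPSD{(\sA_{12})}$, yet for $x_1, x_2 > 0$ one has $h_{\sA_{12}}(x_1 \ox x_2) = \infty$ while $h_{\sA_1}(x_1)\,h_{\sA_2}(x_2) = \infty\cdot 0 = 0$. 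Your scaling idea cannot rescue this case, and it is worth seeing why: when $h_{\sA_1}(X_1) = \infty$, \emph{no} positive multiple of $X_1$ lies in $\polarPSD{(\sA_1)}$, so the containment hypothesis carries no information whatsoever about $X_1 \ox X_2$. The fix is either to adopt the convention $0\cdot\infty = \infty$ (making the inequality vacuous there), or --- as in every application in this paper, where the sets are compact by (A.1), or are the bounded Rains, $\PPT_k$ and $\cW$ sets --- to assume boundedness, so that all support-function values are finite; then your cases $a,b \in (0,\infty)$, together with $a=0$ or $b=0$ handled by your (two-parameter, when $a=b=0$) scaling limit, exhaust everything. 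With that finiteness proviso made explicit, your proof is complete.
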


\subsection{Quantum divergences}

A functional $\DD: \density \times \PSD \to \RR$ is a quantum divergence if it satisfies the data-processing inequality $\DD(\cE(\rho)\|\cE(\sigma)) \leq \DD(\rho\|\sigma)$ for any CPTP map $\cE$. In the following, we will introduce several quantum divergences and their fundamental properties, which will be used throughout this work. Additionally, we will define quantum divergences between two sets of quantum states, which will be the main quantity of interest in this work.

\begin{definition}(Umegaki relative entropy~\cite{umegaki1954conditional}.)
For any $\rho\in \density$ and $\sigma \in \PSD$, the Umegaki relative entropy is defined by
\begin{align}\label{eq: Umegaki}
    D(\rho\|\sigma):= \tr[\rho(\log \rho - \log \sigma)],
\end{align}
if $\supp(\rho) \subseteq \supp(\sigma)$ and $+\infty$ otherwise.
\end{definition}

The min-relative entropy is defined by
\begin{align}
    D_{\min}(\rho\|\sigma) := -\log \tr [\Pi_\rho \sigma],
\end{align}
with $\Pi_\rho$ the projection on the support of $\rho$. The max-relative entropy is defined by~\cite{datta2009min,renner2005security},
\begin{align}\label{eq: definition of Dmax}
D_{\max}(\rho\|\sigma):= \log\inf\big\{t \in \RR \;:\; \rho \leq t\sigma \big\}\;,
\end{align}
if $\supp(\rho) \subseteq \supp(\sigma)$ and $+\infty$ otherwise.

\begin{definition}(Measured relative entropy~\cite{donald1986relative,hiai1991proper}.)
For any $\rho \in \density$ and $\sigma \in \PSD$, the measured relative entropy is defined by
\begin{align}
D_{\Meas} (\rho\|\sigma) := \sup_{(\cX,M)} D(P_{\rho,M}\|P_{\sigma,M}),
\end{align}
where $D$ is the Kullback–Leibler divergence and the optimization is over finite sets $\cX$ and positive operator valued measures $M$ on $\cX$ such that $M_x \geq 0$ and $\sum_{x \in \cX} M_x = I$, $P_{\rho,M}$ is a measure on $\cX$ defined via the relation $P_{\rho,M}(x) = \tr[M_x\rho]$ for any $x \in \cX$.    
\end{definition}

A variational expression for $D_{\Meas}$ is given by~\cite[Lemma 1]{Berta2017},
\begin{align}\label{eq: DM variational}
    D_{\Meas}(\rho\|\sigma) = \sup_{\omega \in \PD} \ \tr[\rho \log \omega] + 1  - \tr[\sigma \omega].
\end{align}

\begin{definition}(Measured \Renyi divergence~\cite{Berta2017}.)
    Let $\alpha \in (0,1) \cup (1,\infty)$.  For any $\rho \in \density$ and $\sigma \in \PSD$, the {measured \Renyi divergence} is defined as
    \begin{align}\label{eq: definition DM alpha}
    D_{\Meas, \alpha} (\rho\|\sigma) := \sup_{(\cX,M)} D_{\alpha}(P_{\rho,M}\|P_{\sigma,M}),
    \end{align}
    where $D_{\alpha}$ is the classical \Renyi divergence. 
\end{definition}

The following result shows the ordering relation among different relative entropies.

\begin{lemma}\label{thm: comparison of quantum divergence}
Let $\alpha \in [1/2,1)$. For any $\rho \in \density$ and $\sigma \in \PSD$, 
\begin{align}
	D_{\min}(\rho\|\sigma) \leq D_{\Meas, \alpha}(\rho\|\sigma) \leq D_{\Meas}(\rho\|\sigma) \leq  D(\rho\|\sigma).
\end{align}
\end{lemma}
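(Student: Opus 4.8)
The plan is to treat the three inequalities independently, in each case pushing the quantum statement down to an elementary classical fact about the outcome distributions $P_{\rho,M}$ and $P_{\sigma,M}$ induced by a POVM $M$.

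I would dispatch the two rightmost inequalities first, as both are standard reductions. For $D_{\Meas}(\rho\|\sigma)\le D(\rho\|\sigma)$, fix a POVM $M=\{M_x\}_{x\in\cX}$ and consider the measurement channel $\cE_M(X)=\sum_{x\in\cX}\tr[M_x X]\,\proj{x}$, which is CPTP. Since $\cE_M(\rho)$ and $\cE_M(\sigma)$ are simultaneously diagonal in $\{\ket{x}\}$, the quantity $D(\cE_M(\rho)\|\cE_M(\sigma))$ equals the classical divergence $D(P_{\rho,M}\|P_{\sigma,M})$; data processing for the Umegaki relative entropy then bounds this by $D(\rho\|\sigma)$, and taking the supremum over $(\cX,M)$ gives the claim. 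For $D_{\Meas,\alpha}(\rho\|\sigma)\le D_{\Meas}(\rho\|\sigma)$, I would invoke the classical fact that $\beta\mapsto D_\beta(P\|Q)$ is nondecreasing, so that for $\alpha\in[1/2,1)$ one has $D_\alpha(P_{\rho,M}\|P_{\sigma,M})\le D(P_{\rho,M}\|P_{\sigma,M})$ for every fixed $M$; taking suprema over POVMs on both sides then yields the inequality.

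The leftmost inequality $D_{\min}(\rho\|\sigma)\le D_{\Meas,\alpha}(\rho\|\sigma)$ is the one place where a specific measurement must be produced rather than a generic monotonicity invoked, and this is the step I would regard as the crux. The idea is that the binary POVM $M=\{\Pi_\rho,\,I-\Pi_\rho\}$ already saturates $D_{\min}$. Indeed $P_{\rho,M}=(1,0)$ is deterministic while $P_{\sigma,M}=(\tr[\Pi_\rho\sigma],\,1-\tr[\Pi_\rho\sigma])$, so for any $\alpha\in(0,1)$ the sum $\sum_x P_{\rho,M}(x)^\alpha P_{\sigma,M}(x)^{1-\alpha}$ collapses (using $0^\alpha=0$) to $\tr[\Pi_\rho\sigma]^{1-\alpha}$, whence $D_\alpha(P_{\rho,M}\|P_{\sigma,M})=-\log\tr[\Pi_\rho\sigma]=D_{\min}(\rho\|\sigma)$ for every such $\alpha$. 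Because $D_{\Meas,\alpha}$ is a supremum over measurements, this single choice lower-bounds it by $D_{\min}$, and the restriction $\alpha\in[1/2,1)$ is in fact not needed for this direction.

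The remaining care is only in bookkeeping: one should note $\tr[\Pi_\rho\rho]=1$ so that $P_{\rho,M}$ is genuinely deterministic, keep track of the convention $\supp(\rho)\subseteq\supp(\sigma)$ under which the divergences are finite, and observe that the possible subnormalization of $P_{\sigma,M}$ (since $\sigma\in\PSD$ need not be a state) affects neither the classical monotonicity in $\alpha$ nor the data-processing step. None of these is a genuine obstacle, so once the support-projection measurement is identified the entire chain follows from well-known classical and quantum facts.
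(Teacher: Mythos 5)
Your proof is correct, and for the two rightmost inequalities it coincides with the paper's argument (classical monotonicity in $\alpha$ plus data processing for $D$). Where you genuinely diverge is the crux inequality $D_{\min}(\rho\|\sigma) \leq D_{\Meas,\alpha}(\rho\|\sigma)$. The paper first reduces to $\alpha = 1/2$ via monotonicity of $D_{\Meas,\alpha}$ in $\alpha$, then invokes the variational formula $D_{\Meas,1/2}(\rho\|\sigma) = -\log\inf_{\omega\in\PD}\tr[\rho\omega^{-1}]\tr[\sigma\omega]$ (Alberti's theorem for fidelity), plugs in the perturbed operator $\omega_\ve = \Pi_\rho + \ve(I-\Pi_\rho)$, and takes $\ve\to 0^+$. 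You instead work directly from the definition of $D_{\Meas,\alpha}$ as a supremum over POVMs and exhibit the binary measurement $\{\Pi_\rho, I-\Pi_\rho\}$, for which $P_{\rho,M}=(1,0)$ forces $\sum_x P_{\rho,M}(x)^\alpha P_{\sigma,M}(x)^{1-\alpha} = \tr[\Pi_\rho\sigma]^{1-\alpha}$, so that $D_\alpha(P_{\rho,M}\|P_{\sigma,M}) = -\log\tr[\Pi_\rho\sigma] = D_{\min}(\rho\|\sigma)$ exactly. Your route is more elementary (no variational formula, no $\ve$-limit, no reduction to $\alpha=1/2$) and slightly more general, since it proves the first inequality for all $\alpha\in(0,1)$ rather than only $[1/2,1)$; it also shows the bound is attained by a single fixed measurement independent of $\alpha$. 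The paper's route, by contrast, passes through the variational machinery of~\cite{Berta2017} that it reuses elsewhere, which is its main structural motivation. Your bookkeeping remarks are apt: the second component of $P_{\sigma,M}$ never enters because $P_{\rho,M}(2)=0$, so the possible non-normalization of $\sigma\in\PSD$ is harmless, and the degenerate case $\tr[\Pi_\rho\sigma]=0$ yields $+\infty$ on both sides consistently.
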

\begin{proof}
The last two inequalities follow from the monotonicity in $\alpha$ of the classical R\'enyi divergences and the data processing inequality for $D$. As $D_{\Meas, \alpha}$ is monotone increasing in $\alpha$, it remains to show the first inequality for $\alpha = \frac12$. By the variational formula in~\cite[Eq. (21)]{Berta2017}, we have
\begin{align}
D_{\Meas, 1/2}(\rho\|\sigma) = -\log \inf_{\omega \in \PD} \tr[\rho \omega^{-1}] \tr[\sigma \omega],
\end{align}
which is the same as the Alberti's theorem for quantum fidelity~(see e.g.~\cite[Corollary 3.20]{watrous2018theory}). Consider a feasible solution $\omega_\ve = \Pi_\rho + \ve(I - \Pi_\rho) \in \PD$ with $\ve > 0$. It gives $D_{\Meas, 1/2}(\rho\|\sigma) \geq -\log \tr[\rho \omega_\ve^{-1}] \tr[\sigma \omega_\ve]$. Since $\rho$ has trace one, it gives $\tr[\rho \omega_\ve^{-1}] = \tr[\rho] = 1$. Then we have \begin{align}
D_{\Meas, 1/2}(\rho\|\sigma) \geq -\log \tr[\sigma \omega_\ve] = -\log [(1-\ve) \tr \Pi_\rho \sigma + \ve].
\end{align} 
As the above holds for any $\ve > 0$, we take $\ve \to 0^{+}$ and get $D_{\Meas, 1/2}(\rho\|\sigma) \geq -\log \tr[\Pi_\rho \sigma] = D_{\min}(\rho\|\sigma)$, which completes the proof.
\end{proof}

In this work, we will focus on the study of quantum divergences between two sets of quantum states.

\begin{definition}(Quantum divergence between two sets of states.)\label{def: divergence between two sets}
    Let $\DD$ be a quantum divergence between two quantum states. Then for any sets $\sA\subseteq \density$ and $\sB\subseteq \PSD$, the quantum divergence between these two sets of quantum states is defined by
    \begin{align}
        \DD(\sA\|\sB):= \inf_{\substack{\rho \in \sA\\ \sigma \in \sB}} \DD(\rho\|\sigma). 
    \end{align}
\end{definition}
From the geometric perspective, this quantity characterizes the distance between two sets $\sA$ and $\sB$ under the ``distance metric'' $\DD$. In particular, if $\sA = \{\rho\}$ is a singleton, we write $\DD(\rho\|\sB):= \DD(\{\rho\}\|\sB)$. For two sequences of sets $\{\sA_n\}_{n\in \NN}$ and $\{\sB_n\}_{n \in \NN}$, the regularized divergence is defined by 
\begin{align}
\DD^{\reg}(\sA \| \sB) := \lim_{n \to \infty} \frac{1}{n} \DD(\sA_{n} \| \sB_{n}),
\end{align}
whenever the limit on the right-hand side exists. 

In particular, we will focus on the sequences of sets satisfying the following assumptions.

\begin{assumption}\label{ass: steins lemma assumptions}
    Consider a family of sets $\{\sA_n\}_{n\in \NN}$ satisfying the following properties,
    \begin{itemize}
        \item (A.1) Each $\sA_n$ is convex and compact;
        \item (A.2) Each $\sA_n$ is permutation-invariant; 
        \item (A.3) $\sA_m \ox \sA_k \subseteq \sA_{m+k}$, for all $m,k \in \NN$;
        \item (A.4) $\polarPSD{(\sA_m)} \ox \polarPSD{(\sA_k)} \subseteq \polarPSD{(\sA_{m+k})}$, for all $m,k \in \NN$.
    \end{itemize}
\end{assumption}

\section{Efficient approximation of regularized relative entropies}
\label{Efficient approximation of regularized relative entropies}

In this section, we demonstrate that the regularized relative entropy $D^\reg(\sA\|\sB)$ can be efficiently approximated given efficient descriptions of $\{\sA_n\}_{n\in \NN}$ and $\{\sB_n\}_{n\in \NN}$. The main result is presented below, with some technical definitions provided later.

\begin{shaded}
    \begin{theorem}[Efficient approximation of regularized relative entropies]\label{thm: efficient relative entropy program}
    Let $\cH$ be a Hilbert space of finite dimension $d$. Let $\{\sA_n\}_{n\in\NN}$ and $\{\sB_n\}_{n\in\NN}$ be two sequences of sets satisfying Assumption~\ref{ass: steins lemma assumptions} and $\sA_n \subseteq \density(\cH^{\ox n})$, $\sB_n \subseteq \PSD(\cH^{\ox n})$ and $D_{\max}(\sA_n\|\sB_n) \leq cn$, for all $n \in \NN$ and a constant $c \in \RR_{\pl}$. If each $\sA_m$ and $\sB_m$ have semidefinite (SDP) representations of size $s^{2m}$ satisfying the symmetry conditions in Lemma~\ref{lem: permutation invariance restriction 1}, then both $D(\sA_m \| \sB_m)$ and $D_{\Meas}(\sA_m \| \sB_m)$ can  be computed by quantum relative entropy programs of size $O((m+1)^{k^2})$ with $k=\max\{s,d\}$. 
    
    As a result, $D^{\reg}(\sA \| \sB)$ can be approximated within additive error $\delta$ by a quantum relative entropy program of size $O((m_0+1)^{k^2})$ with $m_0 = \lceil \frac{8d^2}{\delta}\log \frac{d^2}{\delta}\rceil$.
    \end{theorem}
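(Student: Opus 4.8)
The plan is to sandwich the regularized quantity $D^{\reg}(\sA\|\sB)$ between two finite-level quantities that the first part of the theorem can evaluate, namely $\tfrac1m D_{\Meas}(\sA_m\|\sB_m)$ and $\tfrac1m D(\sA_m\|\sB_m)$, and then to show that the width of this sandwich decays like $O(d^2\log(m)/m)$. Choosing the level $m=m_0$ large enough to make the width at most $\delta$ yields a single quantum relative entropy program, of size $O((m_0+1)^{k^2})$ by the first part, whose value is within $\delta$ of $D^{\reg}(\sA\|\sB)$.

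First I would establish the two-sided bound
\[
\tfrac1m D_{\Meas}(\sA_m\|\sB_m)\ \leq\ D^{\reg}(\sA\|\sB)\ \leq\ \tfrac1m D(\sA_m\|\sB_m),\qquad \forall\, m .
\]
The right inequality is the easy direction: since $\sA_j\ox\sA_\ell\subseteq\sA_{j+\ell}$ and $\sB_j\ox\sB_\ell\subseteq\sB_{j+\ell}$ by (A.3) and $D$ is additive on tensor products, the sequence $D(\sA_n\|\sB_n)$ is subadditive, so by Fekete's lemma $D^{\reg}(\sA\|\sB)=\inf_n\tfrac1n D(\sA_n\|\sB_n)\leq\tfrac1m D(\sA_m\|\sB_m)$; the hypothesis $D_{\max}(\sA_n\|\sB_n)\le cn$ ensures $D(\sA_n\|\sB_n)\le cn$ and hence that the limit is finite. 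For the left inequality I would use that, by (A.4) applied to $\{\sB_n\}$ — equivalently, submultiplicativity of the support functions $h_{\sB_n}$ via Lemma~\ref{lema: polar set and support function} — the measured relative entropy between the sets is superadditive, $D_{\Meas}(\sA_{j+\ell}\|\sB_{j+\ell})\geq D_{\Meas}(\sA_j\|\sB_j)+D_{\Meas}(\sA_\ell\|\sB_\ell)$ (this is the role of condition (A.4); see~\cite{fang2024generalized}). Hence $\tfrac1m D_{\Meas}(\sA_m\|\sB_m)\le\sup_n\tfrac1n D_{\Meas}(\sA_n\|\sB_n)=\lim_n\tfrac1n D_{\Meas}(\sA_n\|\sB_n)$, and this limit coincides with $D^{\reg}(\sA\|\sB)$ by the generalized Stein's lemma of~\cite{fang2024generalized}; in fact, once the gap estimate below is available, the equality of the two limits follows a posteriori from sub- and super-additivity, so the sandwich is self-contained.

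The crux is the gap estimate
\[
D(\sA_m\|\sB_m)-D_{\Meas}(\sA_m\|\sB_m)\ \leq\ g(m,d),\qquad g(m,d)=O\!\big(d^2\log(m+1)\big),
\]
which is where permutation invariance (A.2) is essential. Taking optimizers $(\rho^\star,\sigma^\star)$ for $D_{\Meas}(\sA_m\|\sB_m)$ — these exist by compactness (A.1) and may be assumed permutation-invariant, since symmetrizing over the symmetric group keeps them in $\sA_m,\sB_m$ by (A.2) and does not increase $D_{\Meas}$ by joint convexity — it suffices to bound $D(\rho^\star\|\sigma^\star)-D_{\Meas}(\rho^\star\|\sigma^\star)$ for permutation-invariant inputs. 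I would do this by exhibiting a single universal measurement $M$, independent of $\rho^\star$ and $\sigma^\star$, whose outcomes recover almost all of the relative entropy: concretely, a ``type'' measurement built from an $O(1/m)$-net of the $(d^2-1)$-dimensional set of density operators on $\cH$, lifted to $\cH^{\ox m}$ through the Schur--Weyl/de Finetti structure. Such a measurement has at most $\binom{m+d^2-1}{d^2-1}\le(m+1)^{d^2-1}$ outcomes, and for permutation-invariant states it should satisfy $D(P_{\rho^\star,M}\|P_{\sigma^\star,M})\ge D(\rho^\star\|\sigma^\star)-O(d^2\log(m+1))$; since $D_{\Meas}\ge D(P_{\rho^\star,M}\|P_{\sigma^\star,M})$ by definition, the gap estimate follows, the factor $d^2$ originating in the dimension of the state space that must be covered. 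This is the step I expect to be the main obstacle: a generic permutation-invariant operator has exponentially many distinct eigenvalues, so one cannot simply pinch onto its eigenbasis, and constructing a genuinely universal, polynomially-sized measurement that is near-optimal simultaneously for all permutation-invariant pairs is the heart of the argument — the only place the de Finetti/Schur--Weyl machinery is really needed.

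Finally I would optimize the level. The sandwich width is $\tfrac1m\big(D(\sA_m\|\sB_m)-D_{\Meas}(\sA_m\|\sB_m)\big)\le g(m,d)/m=O\!\big(d^2\log(m+1)/m\big)$, and an elementary computation shows that the choice $m_0=\lceil\frac{8d^2}{\delta}\log\frac{d^2}{\delta}\rceil$ makes this at most $\delta$. For this $m_0$ we then have $0\le\tfrac1{m_0}D(\sA_{m_0}\|\sB_{m_0})-D^{\reg}(\sA\|\sB)\le\delta$, so the value $\tfrac1{m_0}D(\sA_{m_0}\|\sB_{m_0})$ — which by the first part of the theorem is computed by a quantum relative entropy program of size $O((m_0+1)^{k^2})$ with $k=\max\{s,d\}$ — approximates $D^{\reg}(\sA\|\sB)$ within additive error $\delta$, as claimed.
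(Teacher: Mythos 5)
Your proposal proves only the second claim of the theorem while assuming the first, and the first claim is where essentially all of the paper's work lies. You write that $\tfrac1{m_0}D(\sA_{m_0}\|\sB_{m_0})$ ``by the first part of the theorem is computed by a quantum relative entropy program of size $O((m_0+1)^{k^2})$'' --- but that first part is not a given; it is the statement being proven. The paper's proof has two nontrivial components that are entirely absent from your plan. First, one must exhibit $D(\sA_m\|\sB_m)$ and $D_{\Meas}(\sA_m\|\sB_m)$ as single convex conic programs at all: for $D$ this is direct via the cone $\cK_{\text{qre}}$ together with the SDP representations of $\sA_m,\sB_m$, but for $D_{\Meas}$ the quantity is a priori an inf (over states) of a sup (over measurements), and the paper converts it into one maximization using the variational formula $D_{\Meas}(\sA_m\|\sB_m)=\sup_{W_m\in\polarPD{(\sB_m)}}-h_{\sA_m}(-\log W_m)$, operator monotonicity of $h_{\sA_m}$, the SDP representation of the epigraph of a support function via SDP duality (Lemma~\ref{lem: SDP representation for dual set}), and the operator relative entropy cone $\cK_{\text{ore}}$. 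Second, these programs have size $O(k^{2m})$, exponential in $m$; the polynomial bound $O((m+1)^{k^2})$ requires the symmetry reduction: restricting both sets to the invariant algebra $\sI_m$ via twirling and data processing (Lemma~\ref{lem: permutation invariance restriction}), showing the SDP representation respects this restriction under the stated symmetry conditions (Lemma~\ref{lem: permutation invariance restriction 1}), and block-diagonalizing through the explicit Schur--Weyl isomorphism $\phi_\cH$, whose block count and block sizes are bounded by $(m+1)^{d}$ and $(m+1)^{d(d-1)/2}$ (Lemma~\ref{lem: permutation invariance restriction 2}). Your proposal uses permutation invariance only inside the gap estimate, not to shrink the programs, so nothing in it explains why the level-$m_0$ program is polynomial rather than exponential in $m_0$ --- and without that, the ``efficient'' in the theorem fails, since $m_0$ scales like $d^2/\delta$.

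Conversely, the part you identify as ``the heart of the argument'' --- the sandwich $\tfrac1m D_{\Meas}(\sA_m\|\sB_m)\leq D^{\reg}(\sA\|\sB)\leq\tfrac1m D(\sA_m\|\sB_m)$ with gap $O(d^2\log m/m)$ --- is not proven in this paper: it is imported wholesale from~\cite[Lemmas 28, 29]{fang2024generalized} (Lemma~\ref{lem: generalized AEP finite estimate} here), with the explicit constant $2(d^2+d)\log(m+d)$ from which $m_0=\lceil\frac{8d^2}{\delta}\log\frac{d^2}{\delta}\rceil$ is read off. Citing it, as you partly do for superadditivity, is legitimate and sufficient; your universal-measurement sketch (a net-based type measurement with at most $(m+1)^{d^2-1}$ outcomes simultaneously near-optimal for all permutation-invariant pairs) is plausible in spirit but is only a sketch, and you correctly flag it as an unfilled obstacle. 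In short: your effort allocation is inverted relative to the paper --- you sketch what can be cited and cite what must be proven --- and as it stands the proposal has a genuine gap on the main claim, namely the polynomial-size quantum relative entropy programs.
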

\end{shaded}

Note that quantum relative entropy programs can be efficiently solved using interior point methods, as shown in~\cite{fawzi2023optimal,he2024exploiting}. Moreover, a numerical toolkit named QICS (Quantum Information Conic Solver) has been developed in~\cite{he2024qics} to solve these programs efficiently in practice.

The remaining part of this section will provide a proof of the above result. We begin by defining quantum relative entropy programs and then show that the converging bounds for $D^\reg(\sA\|\sB)$, derived from~\cite{fang2024generalized} under structural assumptions on the sets $\sA_n$ and $\sB_n$, allow for an approximation of $D^\reg(\sA\|\sB)$ using quantum relative entropy programs. We then exploit the symmetry in these sets to reduce the size of the convex programs.

\subsection{Quantum relative entropy programs}

\begin{definition}[Conic program]
A conic program over a convex cone $K \subseteq \sE$ is an optimization problem of the form $\min_{x \in \sE} \{ \<c,x\> : x \in K,\, F(x) = g\}$ where $F:\sE \to \sF$ is a linear map and $g \in \sF$~\cite[Chapter 2]{ben2001lectures}.
The size of the program is defined as $\dim(K)$.
\end{definition}

Motivated by the conic program, we introduce the conic representation of a convex set~\cite{gouveia2013lifts}.

\begin{definition}[Conic representation]\label{def:cvxsetKrep}
A convex set $\cvxset \subseteq \sE$ has a conic representation over $K$ if it can be written as $\cvxset = \{\Pi(x) : x \in K,\; F(x) = g\}$
where $\Pi:\operatorname{linspan}(K) \to \sE$ and $F:\operatorname{linspan}(K)\to \sF$ are linear maps and $g \in \sF$. The size of the representation is $\dim(K)$.
\end{definition}

Clearly, if a convex set $\cvxset$ has a conic representation over $K$, then any linear optimization problem over $\sC$ can be expressed as a conic program over $K$. 
We recall below the standard definition of a semidefinite representation of a convex set~\cite[Definition 1]{gouveia2013lifts}. 

\begin{definition}[SDP representation]\label{def: sdp representation definition}
A convex set $\cvxset \subseteq \sE$ has a SDP representation of size\footnote{We adopt this convention for consistency with Definition \ref{def:cvxsetKrep}. This convention is different from existing conventions in the literature where the size of the representation would be $s$ instead of $s^2$.}
$s^2$ if it can be written as $\cvxset = \left\{\Pi(x): x \in \PSD(\CC^s),\, F(x) = g\right\}$,
where $\Pi: \HERM(\CC^s) \to \sE$ and $ F:\HERM(\CC^s) \to \sF$ are linear maps
and $g \in \sF$. 
\end{definition}

\begin{definition}[Quantum relative entropy program]
A quantum relative entropy program is a conic program over a Cartesian product of positive semidefinite cones, quantum relative entropy cones, and operator relative entropy cones. The quantum relative entropy cone is defined as~\cite{chandrasekaran2017relative,he2024exploiting,he2024qics}:
\begin{align}
\cK_{\text{qre}}(\CC^n) := \text{cl} \{(A,B,t) \in \PSD(\CC^n) \times \PSD(\CC^n) \times \RR : A \ll B,\, D(A\|B) \leq t\},
\end{align}
where $A \ll B$ indicates that the support of $A$ is included in the support of $B$. The operator relative entropy cone is defined as:
\begin{align}
\cK_{\text{ore}}(\CC^n) := \text{cl} \{(A,B,T) \in \PD(\CC^n) \times \PD(\CC^n) \times \HERM(\CC^n) : A \ll B,\, D_{\operatorname{op}}(A\|B) \leq T\},
\end{align}
with $D_{\operatorname{op}}(A\|B) := A^{1/2} \log(A^{1/2} B^{-1} A^{1/2}) A^{1/2}$ being the operator relative entropy.
\end{definition}

The following well-known proposition will be useful later, see e.g., 
\cite[Lemma 4.1.8]{netzer2012spectrahedra}. We include a sketch for convenience.

\begin{lemma}\label{lem: SDP representation for dual set}
Let $\cvxset$ be a convex set in some Euclidean space $\sE$. If $\cvxset$ has an SDP representation of size $s^2$, then the epigraph of its support function
\begin{align}
   \epi(h_{\cvxset}) := \{(w, t) \in \sE \times \RR : h_{\cvxset}(w) \leq t\}.
\end{align}
has an SDP representation of size $s^2+1$.
\end{lemma}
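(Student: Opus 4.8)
The plan is to establish the SDP representation of $\epi(h_{\cvxset})$ by first expressing the support function through the given SDP representation of $\cvxset$ and then taking its conic dual. Suppose $\cvxset = \{\Pi(x) : x \in \PSD(\CC^s),\, F(x) = g\}$ is the SDP representation of size $s^2$. By definition, the support function is
\begin{align}
h_{\cvxset}(w) = \sup_{\sigma \in \cvxset} \tr[\sigma w] = \sup_{x} \{ \tr[\Pi(x) w] : x \in \PSD(\CC^s),\, F(x) = g\}.
\end{align}
This is a semidefinite program in the variable $x$, so the idea is to pass to its Lagrangian dual, which will be another semidefinite program whose feasibility and objective encode the constraint $h_{\cvxset}(w) \leq t$ jointly as linear and positive-semidefinite conditions on the auxiliary dual variables together with $(w,t)$.

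First I would form the dual by introducing a dual variable $y \in \sF$ for the affine constraint $F(x) = g$ and using the adjoint maps $\Pi^*$ and $F^*$. Writing the primal objective as $\tr[\Pi^*(w)\, x]$, weak duality gives that for any feasible dual point the objective value upper bounds $h_{\cvxset}(w)$; more precisely, the dual reads $\inf_y \{ \<g,y\> : F^*(y) - \Pi^*(w) \succeq 0 \}$, since the maximization of a linear functional over the PSD cone is finite (and equals the dual value) exactly when $F^*(y) - \Pi^*(w)$ is PSD for some $y$. The condition $h_{\cvxset}(w) \leq t$ is then equivalent to the existence of a dual variable $y$ with $F^*(y) - \Pi^*(w) \in \PSD(\CC^s)$ and $\<g,y\> \leq t$. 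This existential description over $(w,t,y)$ is precisely an SDP representation: the PSD cone involved is $\PSD(\CC^s)$ of size $s^2$, and the single scalar inequality $\<g,y\> \leq t$ contributes one additional nonnegative slack coordinate, giving total size $s^2 + 1$.

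The step requiring the most care is justifying that weak duality can be upgraded to the exact equivalence claimed, i.e., that $h_{\cvxset}(w) \leq t$ holds if and only if the dual system is feasible with value at most $t$. To close the gap I would invoke strong duality for semidefinite programs, which holds under a Slater-type constraint qualification; this is the standard hypothesis implicit in such lift constructions and is exactly why the result is attributed to \cite[Lemma 4.1.8]{netzer2012spectrahedra}. One subtlety is the possibility that $h_{\cvxset}(w) = +\infty$ (when $w$ lies outside the domain of the support function), but in that regime the dual system is infeasible, so the equivalence still holds vacuously and the representation correctly excludes such $(w,t)$. Assembling these pieces, the map sending the lifted variable $(x_{\mathrm{PSD}}, \text{slack})$ to $(w,t)$ together with the affine constraints exhibits $\epi(h_{\cvxset})$ as the projection of a spectrahedron over $\PSD(\CC^s) \times \RR_{\pl}$, establishing the claimed size $s^2 + 1$.
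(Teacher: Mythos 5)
Your proposal is correct and follows essentially the same route as the paper: dualize the SDP defining $h_{\cvxset}(w)$ to obtain $h_{\cvxset}(w) = \inf_{\lambda} \{\<\lambda, g\> : F^*(\lambda) - \Pi^*(w) \succeq 0\}$ (under the same strict-feasibility caveat the paper notes), and then read off the existential description of $\epi(h_{\cvxset})$ of size $s^2+1$. Your additional observations---that weak duality alone handles one direction and that dual infeasibility covers the $h_{\cvxset}(w) = +\infty$ case---are correct refinements of the paper's sketch, not a different argument.
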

\begin{proof}[Proof sketch.]
Assume $\cvxset$ has an SDP representation as in Definition~\ref{def: sdp representation definition} of size $s^2$. Given $w \in \sE$ we have
\begin{align}
h_{\cvxset}(w) &:= \sup_{z \in \cvxset} \< w , z \> \\
		   &=  \sup \{ \< w , \Pi(Y) \> : Y \in \PSD(\CC^s), F(Y) = g\} \label{eq: support function SDP}\\
		   &= \inf_{\lambda \in \sF} \{ \<\lambda, g\> : F^*(\lambda) - \Pi^*(w) \geq 0 \}
\end{align}
where $F^*$ and $\Pi^*$  denote respectively the adjoint maps of $F:\HERM(\CC^s)\to \sF$ and $\Pi:\HERM(\CC^s) \to \sE$, and the last step follows from SDP duality (assuming that the SDP in~\eqref{eq: support function SDP} is strictly feasible).
This shows that 
\[
\epi(h_{\sC}) = \{(w,t) \in \sE \times \RR : \exists \lambda \in \sF \text{ s.t. } \<\lambda, g\> \leq t \text{ and } F^*(\lambda) - \Pi^*(w) \in \PSD(\CC^s)\}
\]
which is a semidefinite representation of $\epi(h_{\sC})$ of size $s^2+1$.
\end{proof}

\subsection{Efficient approximation by symmetry reduction}

With the above definitions, we now show that the converging bounds for $D^\reg(\sA\|\sB)$, derived from \cite{fang2024generalized} under structural assumptions on the sets $\sA_n$ and $\sB_n$, allow for an efficient approximation of $D^\reg(\sA\|\sB)$ using quantum relative entropy programs. We first recall the following lemma from \cite{fang2024generalized}.

\begin{lemma}\label{lem: generalized AEP finite estimate}~\cite[Lemma 28, 29]{fang2024generalized}
Let $\cH$ be a Hilbert space of finite dimension $d$. Let $\{\sA_n\}_{n\in\NN}$ and $\{\sB_n\}_{n\in\NN}$ be two sequences of sets satisfying Assumption~\ref{ass: steins lemma assumptions} and $\sA_n \subseteq \density(\cH^{\ox n})$, $\sB_n \subseteq \PSD(\cH^{\ox n})$ and $D_{\max}(\sA_n\|\sB_n) \leq cn$, for all $n \in \NN$ and a constant $c \in \RR_{\pl}$. Then the regularized relative entropy $D^{\reg}(\sA \| \sB)$ can be estimated using the following bounds: 
\begin{align}\label{eq: generalized AEP finite estimate}
    &\frac{1}{m} D_{\Meas}(\sA_m \| \sB_m)\leq D^{\reg}(\sA \| \sB) \leq \frac{1}{m} D(\sA_m \| \sB_m), \quad \forall m \geq 1,
\end{align}
with explicit convergence guarantees
\begin{align}\label{eq: generalized AEP finite estimate convergence}
        \frac{1}{m} D(\sA_m \| \sB_m) - \frac{1}{m} D_{\Meas}(\sA_m \| \sB_m) \leq 
        \frac{1}{m} 2(d^2 + d) \log (m+d).
\end{align} 
\end{lemma}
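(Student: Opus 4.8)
The plan is to prove the theorem in three stages: first realize the two finite-level quantities $D(\sA_m\|\sB_m)$ and $D_{\Meas}(\sA_m\|\sB_m)$ as quantum relative entropy programs; then compress these to polynomial size using the permutation symmetry of the sets; and finally invoke the sandwich bounds of Lemma~\ref{lem: generalized AEP finite estimate} together with its explicit convergence estimate~\eqref{eq: generalized AEP finite estimate convergence} to fix the cutoff $m_0$ that guarantees additive error $\delta$. For the first stage I would write $D(\sA_m\|\sB_m) = \inf\{t : (\rho,\sigma,t)\in\cK_{\mathrm{qre}},\ \rho\in\sA_m,\ \sigma\in\sB_m\}$ over operators on $\cH^{\ox m}$; substituting the given SDP representations of $\sA_m$ and $\sB_m$ for the membership constraints turns this into a conic program over a product of the quantum relative entropy cone and positive semidefinite cones, hence a quantum relative entropy program. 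For $D_{\Meas}$ I would start from the variational formula~\eqref{eq: DM variational}: absorbing the infimum over $\sigma\in\sB_m$ into the term $-\tr[\sigma\omega]$ produces the support function, giving $D_{\Meas}(\sA_m\|\sB_m)=\inf_{\rho\in\sA_m}\sup_{\omega\in\PD}\big(\tr[\rho\log\omega]+1-h_{\sB_m}(\omega)\big)$. Representing $\epi(h_{\sB_m})$ via Lemma~\ref{lem: SDP representation for dual set} and capturing $\tr[\rho\log\omega]$ through the operator relative entropy cone $\cK_{\mathrm{ore}}$ turns this min-max into a single convex conic program of the same type, the joint convexity of $(\rho,\sigma)\mapsto D_{\Meas}(\rho\|\sigma)$ ensuring the reformulation is exact.

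For the second stage, both programs a priori involve operators on $\cH^{\ox m}$ together with SDP lifts of dimension $s^{2m}$, which is exponential in $m$. Because every $\sA_m,\sB_m$ is permutation invariant by (A.2) and both objectives are invariant under the simultaneous conjugation action of the symmetric group $S_m$, the optimizers may be averaged over $S_m$ and thus restricted to the commutant of that action. By Schur--Weyl duality this commutant block-diagonalizes into pieces indexed by Young diagrams with at most $k^2$ rows where $k=\max\{s,d\}$, and the number and sizes of these blocks grow only polynomially in $m$; applying the symmetry-reduction lemma referenced in the theorem statement to both the constraint sets and the cones yields equivalent programs of size $O((m+1)^{k^2})$, which establishes the first assertion.

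For the third stage, Lemma~\ref{lem: generalized AEP finite estimate} gives $\frac1m D_{\Meas}(\sA_m\|\sB_m)\le D^{\reg}(\sA\|\sB)\le \frac1m D(\sA_m\|\sB_m)$ with gap at most $\frac1m 2(d^2+d)\log(m+d)$, so either finite-level program approximates $D^{\reg}(\sA\|\sB)$ to within this gap. It then remains to verify that $m_0=\lceil \tfrac{8d^2}{\delta}\log\tfrac{d^2}{\delta}\rceil$ forces the gap below $\delta$: using $d^2+d\le 2d^2$ it suffices that $\tfrac{4d^2}{m_0}\log(m_0+d)\le\delta$, and substituting $m_0$ and bounding $\log(m_0+d)$ by a constant multiple of $\log\tfrac{d^2}{\delta}$ closes this by the standard monotonicity argument for inequalities of the form $m\ge a\log m$. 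Combining this choice of $m_0$ with the size bound of the second stage evaluated at $m=m_0$ gives the final claim.

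I expect the main obstacle to lie in the second stage: certifying that the symmetry-reduced programs remain \emph{exact} quantum relative entropy programs and that the block decomposition really has size $O((m+1)^{k^2})$, in particular handling the SDP-lift variables (governed by $s$) and the ambient operators (governed by $d$) simultaneously through the single parameter $k=\max\{s,d\}$. The conic reformulation of $D_{\Meas}$ in the first stage is the next most delicate point, because of the embedded support function and the need to rule out a duality gap, whereas the $m_0$ estimate of the third stage is a routine calculation.
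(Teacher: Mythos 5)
Your proposal does not prove the statement at issue. The statement is Lemma~\ref{lem: generalized AEP finite estimate} itself---the sandwich bounds \eqref{eq: generalized AEP finite estimate} and the convergence guarantee \eqref{eq: generalized AEP finite estimate convergence}---which this paper does not reprove but imports from \cite[Lemma 28, 29]{fang2024generalized}. What you have outlined instead is a proof of the downstream Theorem~\ref{thm: efficient relative entropy program}: your first two stages reproduce Proposition~\ref{prop: relative entropy program} and the symmetry reduction of Lemmas~\ref{lem: permutation invariance restriction}--\ref{lem: permutation invariance restriction 2}, and your third stage explicitly ``invokes the sandwich bounds of Lemma~\ref{lem: generalized AEP finite estimate} together with its explicit convergence estimate.'' As an argument for the lemma this is circular: the inequalities to be established are taken as a black box, and nowhere in the outline are they derived.

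A genuine proof requires three ingredients, none of which appear in your proposal. First, the upper bound in \eqref{eq: generalized AEP finite estimate}: assumption (A.3) together with the additivity of the Umegaki relative entropy on tensor products makes $m \mapsto D(\sA_m\|\sB_m)$ subadditive, so Fekete's lemma gives existence of the limit and $D^{\reg}(\sA\|\sB) \leq \frac{1}{m}D(\sA_m\|\sB_m)$. Second, the lower bound: via the variational form $D_{\Meas}(\sA_m\|\sB_m) = \sup_{W \in \polarPD{(\sB_m)}} -h_{\sA_m}(-\log W)$ (used in the proof of Proposition~\ref{prop: relative entropy program}, from \cite[Lemma 19]{fang2024generalized}), the polar assumption (A.4) guarantees that tensor products of feasible dual witnesses remain feasible, making $m \mapsto D_{\Meas}(\sA_m\|\sB_m)$ superadditive; combined with $D_{\Meas} \leq D$ from Lemma~\ref{thm: comparison of quantum divergence}, this yields $\frac{1}{m}D_{\Meas}(\sA_m\|\sB_m) \leq D^{\reg}(\sA\|\sB)$. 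Third, and hardest, the quantitative gap \eqref{eq: generalized AEP finite estimate convergence}: by (A.2) the optimizers can be restricted to permutation-invariant operators (as in Lemma~\ref{lem: permutation invariance restriction}), and for such operators a pinching/de Finetti-type estimate---exploiting that the number and sizes of the symmetric blocks grow only polynomially in $m$, cf.\ \eqref{eq: num of blocks}--\eqref{eq: block size}---bounds $D - D_{\Meas}$ by $2(d^2+d)\log(m+d)$, with the hypothesis $D_{\max}(\sA_n\|\sB_n) \leq cn$ ensuring finiteness along the way. Your outline contains no trace of the sub/superadditivity arguments or the pinching estimate, so it cannot be repaired into a proof of the lemma; it would have to be restarted along the lines above.
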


\begin{shaded}
\begin{proposition}(Approximation of regularized relative entropies.)\label{prop: relative entropy program}
    Assume the same conditions for $\{\sA_n\}_{n\in \NN}$ and $\{\sB_n\}_{n\in \NN}$ as in Lemma~\ref{lem: generalized AEP finite estimate}. If each $\sA_m$ and $\sB_m$ have SDP representations of size $s^{2m}$, then $D(\sA_m \| \sB_m)$ and $D_{\Meas}(\sA_m \| \sB_m)$ can be computed by quantum relative entropy programs of size $O(k^{2m})$ with $k=\max\{s,d\}$.  
    
    As a result, we can approximate $D^{\reg}(\sA \| \sB)$ within additive error $\delta$ by a quantum relative entropy program of size $O(k^{2m_0})$ with $m_0 = \lceil \frac{8d^2}{\delta}\log \frac{d^2}{\delta}\rceil$ using Eq.~\eqref{eq: generalized AEP finite estimate convergence}. 
\end{proposition}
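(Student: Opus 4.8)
The plan is to treat $D(\sA_m\|\sB_m)$ and $D_{\Meas}(\sA_m\|\sB_m)$ separately, produce an explicit quantum relative entropy program for each at level $m$, and then combine them through the sandwich bounds of Lemma~\ref{lem: generalized AEP finite estimate}. The case $D(\sA_m\|\sB_m) = \inf_{\rho\in\sA_m,\sigma\in\sB_m} D(\rho\|\sigma)$ is immediate: $D$ is jointly convex and the constraint $D(\rho\|\sigma)\le t$ is exactly membership in the quantum relative entropy cone, so
\begin{align}
D(\sA_m\|\sB_m) = \inf\left\{ t : \rho\in\sA_m,\ \sigma\in\sB_m,\ (\rho,\sigma,t)\in\cK_{\text{qre}}(\CC^{d^m})\right\}.
\end{align}
Substituting the SDP representations of $\sA_m$ and $\sB_m$ (each of size $s^{2m}$, i.e.\ over $\PSD(\CC^{s^m})$) and using that $\cK_{\text{qre}}(\CC^{d^m})$ has dimension $O(d^{2m})$ gives a program of total size $O(s^{2m}+d^{2m}) = O(k^{2m})$.

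The measured case is the crux, because $D_{\Meas}$ is itself a supremum by the variational formula~\eqref{eq: DM variational}, so $D_{\Meas}(\sA_m\|\sB_m) = \inf_{\rho,\sigma}\sup_\omega[\tr[\rho\log\omega]+1-\tr[\sigma\omega]]$ is a priori a saddle-point rather than a plain minimization. I would first invoke Sion's minimax theorem---the objective is linear (hence convex) in $(\rho,\sigma)$ over the compact convex set $\sA_m\times\sB_m$ and concave in $\omega\in\PD$---to exchange the order and collapse the inner infimum into support functions:
\begin{align}
D_{\Meas}(\sA_m\|\sB_m) = \sup_{\omega\in\PD}\left[ -h_{\sA_m}(-\log\omega) + 1 - h_{\sB_m}(\omega)\right].
\end{align}
The two support-function terms are then linearized. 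The term $h_{\sB_m}(\omega)$ is handled directly by Lemma~\ref{lem: SDP representation for dual set}: introduce $u$ with $(\omega,u)\in\epi(h_{\sB_m})$, an SDP block of size $s^{2m}+1$. For $h_{\sA_m}(-\log\omega)$ the key observation is that $-\log\omega = D_{\operatorname{op}}(I\|\omega)$, so an auxiliary Hermitian variable $\Lambda$ with $(I,\omega,\Lambda)\in\cK_{\text{ore}}(\CC^{d^m})$ enforces $\Lambda\ge-\log\omega$; since every element of $\sA_m$ is positive semidefinite, $h_{\sA_m}$ is monotone with respect to the Löwner order, so replacing $-\log\omega$ by $\Lambda$ is tight at the optimum $\Lambda=-\log\omega$. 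Introducing $v$ with $(\Lambda,v)\in\epi(h_{\sA_m})$ via Lemma~\ref{lem: SDP representation for dual set} yields
\begin{align}
D_{\Meas}(\sA_m\|\sB_m) = \sup\left\{ -v+1-u : (I,\omega,\Lambda)\in\cK_{\text{ore}}(\CC^{d^m}),\ (\Lambda,v)\in\epi(h_{\sA_m}),\ (\omega,u)\in\epi(h_{\sB_m})\right\},
\end{align}
a quantum relative entropy program combining one operator relative entropy cone of dimension $O(d^{2m})$ with two SDP epigraph blocks of size $O(s^{2m})$, hence of total size $O(k^{2m})$.

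For the final statement I would feed $m=m_0$ into the sandwich inequality~\eqref{eq: generalized AEP finite estimate}: computing either program at level $m_0$ produces a value differing from $D^{\reg}(\sA\|\sB)$ by at most the gap $\frac1{m_0}2(d^2+d)\log(m_0+d)$ controlled by~\eqref{eq: generalized AEP finite estimate convergence}. It then remains to verify the routine estimate that $m_0 = \lceil\frac{8d^2}{\delta}\log\frac{d^2}{\delta}\rceil$ forces this gap below $\delta$; bounding $d^2+d\le 2d^2$ reduces this to the standard inversion of $m_0\ge\frac{4d^2}{\delta}\log(m_0+d)$, which I would check directly, giving program size $O(k^{2m_0})$. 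The main obstacle throughout is the measured-relative-entropy step: recognizing that the saddle-point flattens under minimax and that the resulting matrix-logarithm term is exactly captured by the operator relative entropy cone together with monotonicity of the support function.
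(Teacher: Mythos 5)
Your proposal is correct, and its endgame coincides with the paper's proof: the Umegaki part is handled identically, and for the measured part you arrive at the same three-block program --- one operator relative entropy cone certifying $\Lambda \geq -\log\omega$ via $D_{\op}(I\|\omega)=-\log\omega$, plus two epigraph blocks supplied by Lemma~\ref{lem: SDP representation for dual set} --- justified by the same observation that $\sA_m\subseteq\PSD$ makes $h_{\sA_m}$ monotone in the L\"owner order, so the relaxation $\Lambda\geq-\log\omega$ is tight. The one genuine difference is upstream: the paper simply cites \cite[Lemma 19]{fang2024generalized} for the dual formula $D_{\Meas}(\sA_m\|\sB_m)=\sup_{W\in\polarPD{(\sB_m)}}-h_{\sA_m}(-\log W)$, whereas you re-derive an equivalent Lagrangian form $\sup_{\omega\in\PD}\left[-h_{\sA_m}(-\log\omega)+1-h_{\sB_m}(\omega)\right]$ directly from the variational expression~\eqref{eq: DM variational} via Sion's minimax; the swap is indeed valid since $\sA_m\times\sB_m$ is compact convex by (A.1) and the objective is linear in $(\rho,\sigma)$ and concave in $\omega$, so this step is sound and makes the argument self-contained rather than dependent on the companion paper. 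Your penalty form $1-h_{\sB_m}(\omega)$ also lets you dispense with the paper's auxiliary scalars $t,t'$ and the shift identity $h_{\sA_m}(V_m+t'I_m)=h_{\sA_m}(V_m)+t'$, which the paper uses to reduce to the normalized constraints $(W_m,1)\in\epi(h_{\sB_m})$ and $(V_m+t'I_m,t)\in\epi(h_{\sA_m})$; the two programs are equivalent (trade the scale of $\omega$ against the constant) and of the same size $O(k^{2m})$, so nothing is lost. On the final claim, you and the paper both invoke Eq.~\eqref{eq: generalized AEP finite estimate convergence} and treat the inversion yielding $m_0=\lceil\frac{8d^2}{\delta}\log\frac{d^2}{\delta}\rceil$ as routine; the paper's proof likewise does not spell out this arithmetic, so flagging it as a to-check estimate rather than verifying it is consistent with the source (though note the inequality $8\log x+1\leq x$ with $x=d^2/\delta$ implicit in that inversion only holds once $d^2/\delta$ is moderately large, so the "routine" check does carry a small-regime caveat inherited from the cited lemma).
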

\end{shaded}
\begin{proof}
Suppose $\sA_m$ and $\sB_m$ have SDP representations as
\begin{align}
    \sA_m & = \left\{\Pi_m(X): X \in \PSD(\CC^{s^m}), F_m(X) = g_m\right\},\\
    \sB_m & = \left\{\Pi'_m(X'): X' \in \PSD(\CC^{s^m}), F'_m(X') = g'_m\right\}.
\end{align}
Then for the quantum relative entropy, we can write
\begin{align}
    D(\sA_m\|\sB_m) & = \inf \left\{t: \rho_m \in \sA_m, \sigma_m \in \sB_m, t \geq D(\rho_m\|\sigma_m)\right\}\\
    & = \inf \left\{t: \rho_m \in \sA_m, \sigma_m \in \sB_m, (\rho_m, \sigma_m, t) \in \cK_{\text{qre}}\right\} \\
    &= \inf \Big\{t: \rho_m = \Pi_m(X), \sigma_m = \Pi'_m(X'), F_m(X) = g_m, F'_m(X') = g'_m \\
    &\quad ((\rho_m, \sigma_m, t), X, X') \in \cK_{\text{qre}} \times \PSD(\CC^{s^m}) \times \PSD(\CC^{s^m}) \Big\}
\end{align}
which is a quantum relative entropy program of size $2d^{2m}+2s^{2m}+1 = O(k^{2m})$. For the measured relative entropy, we have from~\cite[Lemma 19]{fang2024generalized} that
\begin{align}
    D_{\Meas}(\sA_m\|\sB_m) = \sup_{W_m \in (\sB_m)^\circ_{\pl \pl}} -h_{\sA_m}(-\log W_m).
\end{align}
We now proceed to write the convex program more explicitly. Since $\sA_m \subseteq \PSD$, we know that $h_{\sA_m}$ is operator monotone (i.e., $h_{\sA_m}(X) \leq h_{\sA_m}(Y)$ if $X\leq Y$). This gives
\begin{align}\label{eq: DM general convex program proof tmp1}
D_{\Meas}(\sA_m\| \sB_m) = 
\sup_{W_m, V_m} \left\{ -h_{\sA_m}(V_m) :  W_m\in \polarPD{(\sB_m)}, V_m \geq -\log W_m \right\}.
\end{align}
Noting that $h_{\sA_m}(V_m + t'I_m) = h_{\sA_m}(V_m) + t'$ as $\sA_m \subseteq \density$, we have
\begin{align}
    h_{\sA_m}(V_m) & = \inf_{t'} \left\{h_{\sA_m}(V_m + t'I_m) - t' : V_m + t'I_m \geq 0\right\}\\
    & = \inf_{t,t'} \left\{ t - t' : t > 0, V_m + t'I_m \in t{\sA_m^\circ}, V_m+t'I_m \geq 0 \right\}.
\end{align}
Taking this into Eq.~\eqref{eq: DM general convex program proof tmp1}, we have 
\begin{align}\label{eq: DM Am Bm variational}
    D_{\Meas}(\sA_m \| \sB_m) 
    =  \sup_{W_m,V_m,t,t'} \Big\{&t'-t: t > 0, (W_m, 1) \in \epi(h_{\sB_m}),\\ & (V_m + t'I_m, t) \in \epi(h_{\sA_m}), (I_m, W_m, V_m) \in \cK_{\text{ore}}\Big\} \notag
\end{align}
which is also a quantum relative entropy program of size $2s^{2m}+2+3d^{2m} = O(k^{2m})$ by Lemma~\ref{lem: SDP representation for dual set}. 
\end{proof}

The above result shows that the regularized relative entropy $D^\reg(\sA\|\sB)$ can be estimated using quantum relative entropy programs but with size exponential in the dimension $\max\{d,s\}$. In the following, we aim to exploit the permutation invariance of $\sA_m$ and $\sB_m$ to reduce the complexity and make the estimation more efficient. For this, we first show that the optimal solution of the relative entropy programs can be restricted to permutation invariant states.

Let $k \in \NN$ be a fixed integer and $\cH$ be a finite-dimensional vector space with $\dim(\cH) = d$. Then the natural action of the symmetric group $\mathfrak{S}_k$ on $\cH^{\ox k}$ by permuting the indices is given by
\begin{align}
    \pi \cdot (h_1\ox \cdots \ox h_k) = h_{\pi^{-1}(1)} \ox \cdots \ox h_{\pi^{-1}(k)},
\end{align}
for any $h_i \in \cH$ and $\pi \in \mathfrak{S}_k$.  Let $P_\pi$ be the permutation operator corresponding to the action of $\pi$ on the suitable space. Denote the twirling operation as $\cT_k(X): = \frac{1}{\left|\mathfrak{S}_k\right|} \sum_{\pi \in \mathfrak{S}_k} P_{\pi} X P_{\pi}^\dagger$.
The algebra of $\mathfrak{S}_k$-invariant operators on $\cH^{\ox k}$ is denoted by 
\begin{align}
    \sI_k := \left\{X \in \sL(\cH^{\ox k}): P_\pi X P_\pi^\dagger = X, \forall \pi \in \mathfrak{S}_k\right\}.
\end{align}

\begin{shaded}
\begin{lemma}\label{lem: permutation invariance restriction}
Let $\sA_m \subseteq \density(\cH^{\ox m})$ and $\sB_m \subseteq \PSD(\cH^{\ox m})$ be convex, compact, permutation invariant sets. Then we have that
\begin{align}
    D(\sA_m \|\sB_m) & =\ \; D(\sA_m \cap \sI_m \| \sB_m \cap \sI_m),\\
    D_\Meas(\sA_m \|\sB_m) & = D_\Meas(\sA_m \cap \sI_m \| \sB_m \cap \sI_m).
\end{align}
\end{lemma}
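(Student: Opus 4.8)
The plan is to realize the twirling map $\cT_m$ as a data-processing channel and apply the data-processing inequality. First I would dispose of the easy direction: since $\sA_m \cap \sI_m \subseteq \sA_m$ and $\sB_m \cap \sI_m \subseteq \sB_m$, restricting the defining infimum to the smaller sets can only increase its value, so $D(\sA_m\|\sB_m) \leq D(\sA_m \cap \sI_m \| \sB_m \cap \sI_m)$, and the identical observation gives the corresponding inequality for $D_\Meas$.

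For the reverse inequality I would fix arbitrary $\rho \in \sA_m$ and $\sigma \in \sB_m$ and apply the twirl $\cT_m$. The two facts I would establish are: (i) $\cT_m$ is a CPTP map, being the uniform convex combination $\frac{1}{|\mathfrak{S}_m|}\sum_{\pi} P_\pi(\cdot)P_\pi^\dagger$ of the unitary conjugation channels $X \mapsto P_\pi X P_\pi^\dagger$; and (ii) $\cT_m(\rho) \in \sA_m \cap \sI_m$ while $\cT_m(\sigma) \in \sB_m \cap \sI_m$. Claim (ii) is where the hypotheses enter: each $P_\pi \rho P_\pi^\dagger$ lies in $\sA_m$ by permutation invariance, so their average $\cT_m(\rho)$ lies in $\sA_m$ by convexity, and $\cT_m(\rho) \in \sI_m$ holds by construction of the twirl. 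The same reasoning applied to $\sigma$ places $\cT_m(\sigma)$ in $\sB_m \cap \sI_m$, using that twirling preserves membership in $\density$ and in $\PSD$ respectively.

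With (i) and (ii) in hand, the data-processing inequality for $D$ under the channel $\cT_m$ gives $D(\cT_m(\rho)\|\cT_m(\sigma)) \leq D(\rho\|\sigma)$. Since $(\cT_m(\rho),\cT_m(\sigma))$ is a feasible pair for the infimum defining $D(\sA_m \cap \sI_m \| \sB_m \cap \sI_m)$, this yields $D(\sA_m \cap \sI_m \| \sB_m \cap \sI_m) \leq D(\rho\|\sigma)$. Taking the infimum over all $\rho \in \sA_m$ and $\sigma \in \sB_m$ gives $D(\sA_m \cap \sI_m \| \sB_m \cap \sI_m) \leq D(\sA_m\|\sB_m)$, which together with the easy direction establishes the first equality. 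Because $D_\Meas$ is likewise a quantum divergence obeying the data-processing inequality, the identical argument proves the second equality verbatim.

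I do not anticipate a genuine obstacle: the only points needing care are confirming that $\cT_m$ is completely positive and trace-preserving so that the data-processing inequality applies, and that the twirl keeps $\rho$ in $\density$ and $\sigma$ in $\PSD$, both of which are routine. The convexity and permutation-invariance hypotheses are used solely to guarantee that the twirled states remain in the respective sets; compactness is not needed for the equality itself, though it ensures the infima are attained.
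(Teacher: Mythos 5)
Your proof is correct and follows essentially the same route as the paper: the easy direction from the inclusions $\sA_m \cap \sI_m \subseteq \sA_m$, $\sB_m \cap \sI_m \subseteq \sB_m$, and the reverse direction by applying the twirl $\cT_m$ (a CPTP map, as a uniform mixture of unitary conjugations) together with the data-processing inequality, using convexity and permutation invariance to conclude $\cT_m(\rho) \in \sA_m \cap \sI_m$ and $\cT_m(\sigma) \in \sB_m \cap \sI_m$, then optimizing over $\rho$ and $\sigma$. The paper treats $D_\Meas$ by the identical argument, exactly as you do.
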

\end{shaded}
\begin{proof}
The direction of ``$\leq$'' is clear for both equations. We now show the other direction. For the case of quantum relative entropy, we have that for any $\rho_m \in \sA_m$ and $\sigma_m \in \sB_m$,
\begin{align}
    D(\rho_m\|\sigma_m) \geq D(\cT_m(\rho_m)\|\cT_m(\sigma_m)) \geq D(\sA_m \cap \sI_m \| \sB_m \cap \sI_m),
\end{align}
where the first inequality follows by the data-processing inequality of quantum relative entropy and the second inequality follows because $\cT_m(\rho_m) \in \sA_m \cap \sI_m$ and $\cT_m(\sigma_m) \in \sB_m \cap \sI_m$ by the permutation invariance of $\sA_m$ and $\sB_m$. Optimizing $\rho_m \in \sA_m$ and $\sigma_m \in \sB_m$ on both sides, we get $D(\sA_m \|\sB_m) \geq D(\sA_m \cap \sI_m \| \sB_m \cap \sI_m)$. The case of measured relative entropy follows by the same argument.
\end{proof}

The following result gives the SDP representation of the intersection of $\sA_m$ and $\sI_m$.

\begin{shaded}
    \begin{lemma}\label{lem: permutation invariance restriction 1}
    If the SDP representation of $\sA_m$ in Eq.~\eqref{def: sdp representation definition} satisfies the symmetry conditions: (1) $\Pi_m(P_\pi X P_\pi^\dagger) = P_\pi \Pi_m(X) P_\pi^\dagger$; (2) $F_m(P_\pi X P_\pi^\dagger) = P_\pi F_m(X) P_\pi^\dagger$; and (3) $P_\pi g_m P_\pi^\dagger = g_m$ for any $X \in \PSD((\CC^s)^{\ox m})$ and any permutation operator $P_\pi$ on the suitable spaces, then 
    \begin{align}\label{eq: permutation invariance reduction tmp1}
        \sA_m \cap \sI_m = \left\{\Pi_m(X): X \in \PSD((\CC^s)^{\ox m}) \cap \sI_m, F_m(X) = g_m\right\}.
    \end{align}
    \end{lemma}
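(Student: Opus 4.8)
The plan is to prove the two inclusions in~\eqref{eq: permutation invariance reduction tmp1} separately, with the twirling map $\cT_m$ carrying out the essential work; the three symmetry hypotheses are tailored exactly so that symmetrization can be pushed past the linear maps $\Pi_m$ and $F_m$. Write $R$ for the right-hand side of~\eqref{eq: permutation invariance reduction tmp1}. The inclusion $R \subseteq \sA_m \cap \sI_m$ is immediate: if $\rho = \Pi_m(X)$ with $X \in \PSD((\CC^s)^{\ox m}) \cap \sI_m$ and $F_m(X) = g_m$, then dropping the $\sI_m$ constraint on $X$ shows $\rho \in \sA_m$; and since $P_\pi X P_\pi^\dagger = X$, condition (1) gives $P_\pi \rho P_\pi^\dagger = P_\pi \Pi_m(X) P_\pi^\dagger = \Pi_m(P_\pi X P_\pi^\dagger) = \Pi_m(X) = \rho$ for every $\pi$, so $\rho \in \sI_m$ as well.

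The reverse inclusion $\sA_m \cap \sI_m \subseteq R$ is where the hypotheses are genuinely used. I would take $\rho \in \sA_m \cap \sI_m$ and choose some preimage $X_0 \in \PSD((\CC^s)^{\ox m})$ with $\Pi_m(X_0) = \rho$ and $F_m(X_0) = g_m$; the point is that such an $X_0$ need not be permutation invariant. The remedy is to replace it by its twirl $X := \cT_m(X_0) = \frac{1}{|\mathfrak{S}_m|} \sum_{\pi \in \mathfrak{S}_m} P_\pi X_0 P_\pi^\dagger$, where the permutation operators now act on $(\CC^s)^{\ox m}$. By construction $X \in \sI_m$, and $X \in \PSD$ since each conjugate $P_\pi X_0 P_\pi^\dagger$ is positive semidefinite ($P_\pi$ being unitary) and the cone is convex.

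It then remains to check that $X$ still satisfies the linear constraint and has image $\rho$, and this is precisely what conditions (2)--(3) and (1) deliver, combined with linearity of $F_m$ and $\Pi_m$. For the constraint, $F_m(X) = \frac{1}{|\mathfrak{S}_m|}\sum_\pi F_m(P_\pi X_0 P_\pi^\dagger) = \frac{1}{|\mathfrak{S}_m|}\sum_\pi P_\pi F_m(X_0) P_\pi^\dagger = \frac{1}{|\mathfrak{S}_m|}\sum_\pi P_\pi g_m P_\pi^\dagger = g_m$, using condition (2), then $F_m(X_0) = g_m$, then condition (3). For the image, condition (1) gives $\Pi_m(X) = \frac{1}{|\mathfrak{S}_m|}\sum_\pi P_\pi \Pi_m(X_0) P_\pi^\dagger = \cT_m(\rho)$ on $\cH^{\ox m}$, and since $\rho \in \sI_m$ the twirl acts trivially, so $\Pi_m(X) = \rho$. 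Thus $X$ witnesses $\rho \in R$, finishing the proof.

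I do not anticipate a genuine obstacle: the only conceptual move is recognizing that a preimage $X_0$ of an invariant $\rho$ is itself generally not invariant, which forces a symmetrization step. Everything else is a routine linearity computation, and positivity is preserved automatically because $\cT_m$ is an average of unitary conjugations. Worth emphasizing in the write-up is that $\Pi_m$ and $F_m$ are equivariant with respect to the (distinct) permutation actions on $(\CC^s)^{\ox m}$, on $\cH^{\ox m}$, and on $\sF$ — conditions (1)--(3) are exactly the statement that symmetrizing the lift is consistent with symmetrizing downstairs, which is what makes the twirl land back in the representation.
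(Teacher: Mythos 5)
Your proof is correct and takes essentially the same route as the paper's: the easy inclusion follows from the equivariance of $\Pi_m$ applied to an invariant $X$, and the reverse inclusion is obtained by twirling a preimage and pushing $\cT_m$ through $\Pi_m$ and $F_m$ via conditions (1)--(3), so that the symmetrized lift remains a valid witness. Your write-up is slightly more explicit than the paper's (e.g., spelling out that the twirl preserves positive semidefiniteness as an average of unitary conjugations), but the argument is identical.
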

    \end{shaded}
    \begin{proof}
    We consider the inclusion ``$\supseteq$'' first. For any $\Pi_m(X)$ with $X \in \PSD((\CC^s)^{\ox m}) \cap \sI_m$ and $F_m(X) = g_m$, it is clear that $\Pi_m(X)\in \sA_m$ and moreover, we have 
    \begin{align}
        P_\pi \Pi_m(X) P_\pi^\dagger = \Pi_m(P_\pi X P_\pi^\dagger) = \Pi_m(X),
    \end{align}
    where the first equality follows from the assumption of $\Pi_m$ and the second equality follows as $X \in \sI_m$. As this holds for any $\pi \in \mathfrak{S}_m$, we have $\Pi_m(X) \in \sI_m$ and therefore $\Pi_m(X) \in \sA_m \cap \sI_m$. Now we prove the inclusion ``$\subseteq$''. For any $\Pi_m(X) \in \sI_m$ with $X \in \PSD((\CC^s)^{\ox m}), F_m(X) = g_m$, we have 
    \begin{align}
        \Pi_m(\cT_m(X)) = \cT_m(\Pi_m(X)) = \Pi_m(X),
    \end{align}
    where the first equality follows from the linearity and permutation invariant assumption of $\Pi_m$ and the second equality follows as $\Pi_m(X) \in \sI_m$. Similarly, we can argue that $F_m(\cT_m(X)) = \cT_m(g_m) = g_m$. This implies that $\Pi_m(X)$ belongs to the set on the right hand side of Eq.~\eqref{eq: permutation invariance reduction tmp1} because $\cT(X) \in \PSD((\CC^s)^{\ox m}) \cap \sI_m$. This concludes the proof.
    \end{proof}

Next, we apply representation theory to reduce the size of the SDP representation for $\sA_m \cap \sI_m$. For that, we follow the notation in~\cite{Fawzi_2022}. The set $\sI_m$ of operators acting on $\cH$ that are invariant under permutation is isomorphic to a direct sum of matrix algebras. In order to describe the optimization problems, we have to describe this isomorphism explicitly. For that, let $\Par(d,m)$ be the set of partitions $\lambda$ of $m$ of height $d$ (i.e., $\lambda_1 \geq \dots \geq \lambda_d > 0$ with $\lambda_1 + \dots + \lambda_d = m$), $T_{\lambda, d}$ be the set of semistandard $\lambda$-tableaux with entries in $[d]$ and $m_{\lambda}^{\cH} = |T_{\lambda, d_{\cH}}|$ (see~\cite{Fawzi_2022} for more details on these concepts). We then define the map
\begin{align}
    \phi_\cH:  \sI_m(\cH^{\ox m}) & \to \bigoplus_{\lambda \in \Par(d_\cH,m)} \CC^{m_\lambda^\cH \times m_\lambda^\cH}\quad \text{with}\quad X  \mapsto \bigoplus_{\lambda \in \operatorname{Par}\left(d_{\mathcal{H}}, m\right)}\left(\left\langle X u_\gamma, u_\tau\right\rangle\right)_{\tau, \gamma \in T_{\lambda, d_{\mathcal{H}}}},\label{eq: bijection block digonalization}
\end{align}
where $\{u_{\tau}\}_{\tau \in T_{\lambda, d_{\cH}}}$ are vectors in $\cH^{\otimes m}$ the exact definition of which can be found in~\cite{Fawzi_2022}; see also~\cite[Section 2.1]{litjens2017semidefinite} for more details.

Note that in this decomposition, the number of blocks and the size of the blocks are bounded by a polynomial in $m$. In particular, we have
\begin{align}
t^{\mathcal{H}} & :=\left|\operatorname{Par}\left(d_{\mathcal{H}}, m\right)\right| \leq(m+1)^{d_{\mathcal{H}}}, \label{eq: num of blocks}\\
m_\lambda^{\mathcal{H}} & :=\left|T_{\lambda, d_{\mathcal{H}}}\right| \leq(m+1)^{d_{\mathcal{H}}\left(d_{\mathcal{H}}-1\right) / 2}, \quad \forall \lambda \in \operatorname{Par}\left(d_{\mathcal{H}}, m\right). \label{eq: block size}
\end{align}
Therefore, we get the dimension of the permutation-invariant subspace as
\begin{align}
m^{\mathcal{H}}:=\operatorname{dim}\left[\sI_m\right] \leq(m+1)^{d_{\mathcal{H}}^2}.    
\end{align}

With the SDP representation of $\sA_m\cap \sI_m$ in Lemma~\ref{lem: permutation invariance restriction 1}, we can now apply the linear map $\phi_\cH$ in Eq.~\eqref{eq: bijection block digonalization} to decompose the operator $X$ on the exponentially large space into a block-diagonal form. Specifically, let $\cH_1 = \CC^s$, $\cH_2 = \CC^d$ and $\cH_3 = \CC^f$, and define
\begin{align}
    \Pi_m': \bigoplus_{\lambda \in \Par(s,m)} \CC^{m_\lambda^{\cH_1} \times m_\lambda^{\cH_1}} & \to \bigoplus_{\lambda \in \Par(d,m)} \CC^{m_\lambda^{\cH_2} \times m_\lambda^{\cH_2}}\quad \text{with} \quad 
    X  \mapsto \ \ \ \phi_{\cH_2} (\Pi_{m}(\phi_{\cH_1}^{-1}(X))),
\end{align}
and
\begin{align}
    F_m': \bigoplus_{\lambda \in \Par(s,m)} \CC^{m_\lambda^{\cH_1} \times m_\lambda^{\cH_1}} & \to \bigoplus_{\lambda \in \Par(f,m)} \CC^{m_\lambda^{\cH_3} \times m_\lambda^{\cH_3}} \quad \text{with} \quad
    X  \mapsto \ \ \ \phi_{\cH_3} (F_m(\phi_{\cH_1}^{-1}(X))),
\end{align}
and the linear operator $g_m' = \phi_{\cH_3}(g_m)$. With these notations, we have the following SDP representation.

\begin{shaded}
\begin{lemma}\label{lem: permutation invariance restriction 2}
Let $\cH_1 = \CC^s$ and $\cH_2 = \CC^d$. Then the SDP representation in Eq.~\eqref{eq: permutation invariance reduction tmp1} gives
\begin{align}\label{eq: set reduction tmp3}
    \phi_{\cH_2}(\sA_m \cap \sI_m) = & \Bigg\{\Pi_m'\Bigg(\bigoplus_{\lambda \in \Par(s,m)} X_{\lambda}\Bigg):\\
   &  \forall \lambda \in \Par(s,m), \; X_{\lambda} \in
\PSD\left(\CC^{m_\lambda^{\cH_1}}\right),\; F_m'\Bigg(\bigoplus_{\lambda \in \Par(s,m)} X_{\lambda}\Bigg) = g_m'\Bigg\},\notag
\end{align}
where the SDP representation on the right hand side is of size at most $(m+1)^{s^2}$.
\end{lemma}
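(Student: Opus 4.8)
The plan is to recognize the map $\phi_\cH$ of Eq.~\eqref{eq: bijection block digonalization} as a $*$-isomorphism from the invariant algebra $\sI_m(\cH^{\ox m})$ onto the block-matrix algebra $\bigoplus_{\lambda} \CC^{m_\lambda^\cH \times m_\lambda^\cH}$, and to transport the SDP representation of $\sA_m \cap \sI_m$ from Eq.~\eqref{eq: permutation invariance reduction tmp1} through this isomorphism on all three relevant spaces: the lifting space $\cH_1 = \CC^s$ carrying the variable $X$, the ambient space $\cH_2 = \CC^d$ where $\sA_m$ lives, and the constraint space $\cH_3 = \CC^f$ where $g_m$ lives. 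The single structural fact I need about $\phi_\cH$ is that, being a $*$-isomorphism of finite-dimensional matrix $*$-algebras, it preserves positivity blockwise: an invariant operator $X$ satisfies $X \in \PSD(\cH^{\ox m}) \cap \sI_m$ if and only if $\phi_\cH(X) = \bigoplus_\lambda X_\lambda$ with every block $X_\lambda \in \PSD(\CC^{m_\lambda^\cH})$. Applied to $\cH_1$, this turns the constraint $X \in \PSD((\CC^s)^{\ox m}) \cap \sI_m$ into the free choice of positive semidefinite blocks $X_\lambda \in \PSD(\CC^{m_\lambda^{\cH_1}})$, $\lambda \in \Par(s,m)$, via $X = \phi_{\cH_1}^{-1}\big(\bigoplus_\lambda X_\lambda\big)$.

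With this reparametrization in hand, I would apply $\phi_{\cH_2}$ to both sides of Eq.~\eqref{eq: permutation invariance reduction tmp1}. For the image map, the definition $\Pi_m' = \phi_{\cH_2} \circ \Pi_m \circ \phi_{\cH_1}^{-1}$ gives immediately $\phi_{\cH_2}(\Pi_m(X)) = \Pi_m'\big(\bigoplus_\lambda X_\lambda\big)$, which produces the objective in Eq.~\eqref{eq: set reduction tmp3}; here I use that $\Pi_m(X) \in \sA_m \cap \sI_m \subseteq \sI_m(\cH_2^{\ox m})$ so that $\phi_{\cH_2}$ is applicable, which is precisely the content of Lemma~\ref{lem: permutation invariance restriction 1}. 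For the affine constraint, the symmetry conditions (2) and (3) of Lemma~\ref{lem: permutation invariance restriction 1} are exactly what make the reduction go through: condition (2) forces $F_m$ to send invariant inputs to invariant outputs, so $F_m(X) \in \sI_m(\cH_3^{\ox m})$ and $\phi_{\cH_3}$ may be applied, while condition (3) guarantees $g_m \in \sI_m(\cH_3^{\ox m})$, so that $g_m' = \phi_{\cH_3}(g_m)$ is well defined. Since $\phi_{\cH_3}$ is a bijection, the equation $F_m(X) = g_m$ is equivalent to $\phi_{\cH_3}(F_m(X)) = \phi_{\cH_3}(g_m)$, i.e. to $F_m'\big(\bigoplus_\lambda X_\lambda\big) = g_m'$ by the definition of $F_m'$. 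Combining the two, the image of the right-hand side of Eq.~\eqref{eq: permutation invariance reduction tmp1} under $\phi_{\cH_2}$ is exactly the right-hand side of Eq.~\eqref{eq: set reduction tmp3}, giving the claimed set equality.

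Finally, for the size bound I would simply count the dimension of the new cone. The variable ranges over $\bigoplus_{\lambda \in \Par(s,m)} \PSD(\CC^{m_\lambda^{\cH_1}})$, whose dimension as a cone is $\sum_{\lambda \in \Par(s,m)} (m_\lambda^{\cH_1})^2 = \dim[\sI_m((\CC^s)^{\ox m})]$, and the bound $\dim[\sI_m] \leq (m+1)^{s^2}$ stated above (with $d_\cH = s$) yields size at most $(m+1)^{s^2}$. I expect the only genuinely delicate point to be the careful bookkeeping of the three distinct $\phi$-maps and the verification that each of $\Pi_m(X)$, $F_m(X)$, and $g_m$ indeed lands in the appropriate invariant subalgebra so that the corresponding isomorphism applies; this is precisely where the symmetry hypotheses (1)--(3) are consumed. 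The positivity-preservation of $\phi_\cH$, while essential, is a standard structural fact about the symmetric-group commutant (see \cite{Fawzi_2022,litjens2017semidefinite}) and requires no new work.
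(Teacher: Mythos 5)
Your proposal is correct and follows essentially the same route as the paper's proof: both transport the SDP representation of Eq.~\eqref{eq: permutation invariance reduction tmp1} through the positivity-preserving isomorphisms $\phi_{\cH_1},\phi_{\cH_2},\phi_{\cH_3}$, using the symmetry conditions (1)--(3) to ensure $\Pi_m(X)$, $F_m(X)$ and $g_m$ land in the invariant algebras, and obtain the size bound from $\sum_{\lambda}(m_\lambda^{\cH_1})^2 = \dim[\sI_m] \leq (m+1)^{s^2}$. The only (cosmetic) difference is that you package the paper's explicit two-inclusion verification as a single bijective reparametrization, and you make explicit the well-definedness of $\Pi_m'$, $F_m'$, $g_m'$ that the paper leaves implicit.
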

\end{shaded}
\begin{proof}
We show the inclusion ``$\subseteq$'' first. For any element $\phi_{\cH_2}(\Pi_m(X)) \in \phi_{\cH_2}(\sA_m \cap \sI_m)$, we have $X \in \PSD((\CC^s)^{\ox m}) \cap \sI_m$ and $F_m(X) = g_m$. Let $X' = \phi_{\cH_1}(X)$. Then $X' \in \PSD(\cH_1^{\ox m})$ and $X' \in \bigoplus_{\lambda \in \Par(s,m)} \CC^{m_\lambda^{\cH_1} \times m_\lambda^{\cH_1}}$. Moreover, $\phi_{\cH_2}(\Pi_m(X)) = \Pi_m'(X')$ and $F_m'(X') = g_m'$. This implies that $\phi_{\cH_2}(\Pi_m(X))$ is included in the right-hand side of Eq.~\eqref{eq: set reduction tmp3}. 

Now we show the other inclusion ``$\supseteq$''. For any element $\Pi_m'(X)$ with $X \in \PSD(\cH_1^{\ox m})$, $X \in \bigoplus_{\lambda \in \Par(d,m)} \CC^{m_\lambda^{\cH_1} \times m_\lambda^{\cH_1}}$ and  $F_m'(X) = g_m'$, let $X'' = \phi^{-1}_{\cH_1}(X)$. Then we have $X'' \in \PSD((\CC^d)^{\ox m}) \cap \sI_m$, $\Pi_m'(X) = \phi_{\cH_2}(\Pi_m(X''))$ and $F_m(X'') = g_m$. This implies $\Pi_m'(X) \in \phi_{\cH_2}(\sA_m \cap \sI_m)$ and concludes the proof.
\end{proof}

Combining Lemmas~\ref{lem: permutation invariance restriction},~\ref{lem: permutation invariance restriction 1} and~\ref{lem: permutation invariance restriction 2}, we have
\begin{align}
    D(\sA_m\|\sB_m)  &= D(\phi_{\cH_2}(\sA_m\cap \sI_m)\|\phi_{\cH_2}(\sB_m\cap \sI_m)),\\
    D_{\Meas}(\sA_m\|\sB_m) & = D_{\Meas}(\phi_{\cH_2}(\sA_m\cap \sI_m)\|\phi_{\cH_2}(\sB_m\cap \sI_m)).
\end{align}
Together with Proposition~\ref{prop: relative entropy program}, we get the efficient approximation of regularized relative entropies presented in Theorem~\ref{thm: efficient relative entropy program}.

\section{Applications}
\label{sec: applications}

In this section, we apply the idea of efficient approximation to several quantum information processing tasks. Note that the applicability of our approximation relies on the structural assumptions of the sets in Assumption~\ref{ass: steins lemma assumptions}, which holds directly for many cases such as the singleton set, the set of incoherent states used in coherence theory and the image set of a channel used in adversarial channel discrimination. We exemplify the last case in Section~\ref{sec: Adversarial quantum channel discrimination}.  In cases where the task of interest does not directly satisfy Assumption~\ref{ass: steins lemma assumptions}, one can follow a general methodology of relaxing the set of interest to one that does, particularly regarding the polar assumption in (A.4). For instance, in entanglement theory, the set of separable states can be relaxed to the Rains set, which satisfies all necessary assumptions. Similarly, in fault-tolerant quantum computing, the set of stabilizer states can be relaxed to the set of states with non-positive mana, which also fulfills the required conditions.  We provide several examples in Sections~\ref{sec: Entanglement cost for quantum states and channels},~\ref{sec: Quantum entanglement distillation} and~\ref{sec: Magic state distillation} to illustrate this idea, which gives improvement to the state-of-the-art results in the literature.

\subsection{Adversarial quantum channel discrimination}
\label{sec: Adversarial quantum channel discrimination}

We now apply the general theory in Theorem~\ref{thm: efficient relative entropy program} to compute the minimum output channel divergence, which serves as the key quantity in adversarial quantum channel discrimination~\cite{fang2025adversarial}.

\begin{definition}(Minimum output quantum channel divergence.)
    Let $\DD$ be a quantum divergence between quantum states. Let $\cN\in \CPTP(A:B)$ and $\cM \in \CP(A:B)$. Define the corresponding minimum output channel divergence by
    \begin{align}\label{eq: minimum output channel divergence}
    \DD^{\inf}(\cN\|\cM)&:= \inf_{\substack{\rho\in \density(A)\\ \sigma\in \density(A)}} \DD(\cN_{A\to B}(\rho_{A})\| \cM_{A \to B}(\sigma_{A})). 
    \end{align}
    Define its regularized channel divergence by
    \begin{align}
        \DD^{\inf,\reg}(\cN\|\cM):= \lim_{n\to \infty} \frac{1}{n} \DD^{\inf}(\cN^{\ox n}\|\cM^{\ox n}).
    \end{align}
\end{definition}

The following result shows that the regularized channel divergence can be efficiently approximated by quantum relative entropy programs.

\begin{shaded}
\begin{corollary}(Efficient approximation of the regularized minimum output channel divergences.)
Let $\cN \in \CPTP(A:B)$ and $\cM \in \CP(A:B)$ with $\dim A = s$ and $\dim B = d$. The minimum output quantum channel divergences $D({\cN^{\ox m}\|\cM^{\ox m}})$ and $D_{\Meas}({\cN^{\ox m}\|\cM^{\ox m}})$ can both be computed by quantum relative entropy programs of size $O((m+1)^{k^2})$ with $k=\max\{s,d\}$.

As a result, $D^{\inf,\infty}(\cN\|\cM)$ can be approximated within additive error $\delta$ by a quantum relative entropy program of size $O((m_0+1)^{k^2})$ with $m_0 = \lceil \frac{8d^2}{\delta}\log \frac{d^2}{\delta}\rceil$. 
\end{corollary}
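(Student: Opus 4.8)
The plan is to exhibit this corollary as a direct instance of Theorem~\ref{thm: efficient relative entropy program}, by taking for $\{\sA_n\}$ and $\{\sB_n\}$ the output image sets of the two channels. Concretely, set
\[
\sA_m := \{\cN^{\ox m}(\rho): \rho \in \density(A^m)\} \subseteq \density(B^m), \qquad \sB_m := \{\cM^{\ox m}(\sigma): \sigma \in \density(A^m)\} \subseteq \PSD(B^m),
\]
so that by Definition~\ref{def: divergence between two sets} we have $D(\sA_m\|\sB_m) = D^{\inf}(\cN^{\ox m}\|\cM^{\ox m})$ and likewise for $D_{\Meas}$. Here the Hilbert space playing the role of $\cH$ in the theorem is $\cH_B = \CC^d$, while the lifting variable lives in $\PSD((\CC^s)^{\ox m})$, matching $s = \dim A$. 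Once these sequences are shown to meet all hypotheses of Theorem~\ref{thm: efficient relative entropy program}, the conclusions on program size and on the $\delta$-approximation of $D^{\inf,\reg}(\cN\|\cM)$ follow immediately, with $m_0$ coming from the convergence guarantee of Lemma~\ref{lem: generalized AEP finite estimate}.

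Most of Assumption~\ref{ass: steins lemma assumptions} is routine. Convexity and compactness (A.1) hold because $\sA_m$ and $\sB_m$ are linear images of the convex compact set $\density(A^m)$. Permutation invariance (A.2) follows from the covariance identity $P_\pi \cN^{\ox m}(\rho) P_\pi^\dagger = \cN^{\ox m}(P_\pi \rho P_\pi^\dagger)$ for a tensor-power channel, combined with the permutation invariance of $\density(A^m)$; the same argument handles $\sB_m$. Superadditivity (A.3) is immediate from $\cN^{\ox m}(\rho_1)\ox\cN^{\ox k}(\rho_2) = \cN^{\ox(m+k)}(\rho_1\ox\rho_2)$. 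For the SDP representations, $\density(A^m)=\{\rho\in\PSD((\CC^s)^{\ox m}): \tr\rho = 1\}$ is SDP-representable of size $s^{2m}$, and taking $\Pi_m = \cN^{\ox m}$, $F_m = \tr$, $g_m = 1$ (and similarly $\Pi'_m = \cM^{\ox m}$) yields representations of $\sA_m$ and $\sB_m$ of size $s^{2m}$. The symmetry conditions of Lemma~\ref{lem: permutation invariance restriction 1} are precisely the covariance of $\cN^{\ox m}$ (condition~(1)) together with the permutation invariance of the scalar trace constraint (conditions~(2) and~(3)).

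The one condition requiring an actual computation is the polar closure (A.4), which I would establish through the support-function criterion of Lemma~\ref{lema: polar set and support function}. For $W \geq 0$,
\[
h_{\sA_m}(W) = \sup_{\rho\in\density(A^m)} \tr[\cN^{\ox m}(\rho)\,W] = \big\|(\cN^{\ox m})^*(W)\big\|_\infty,
\]
where $(\cN^{\ox m})^* = (\cN^*)^{\ox m}$ is the completely positive adjoint and the supremum over density matrices selects the top eigenvalue. Since $(\cN^*)^{\ox(m+k)}(W_1\ox W_2) = (\cN^*)^{\ox m}(W_1)\ox(\cN^*)^{\ox k}(W_2)$ and the operator norm is multiplicative across tensor factors, we obtain $h_{\sA_{m+k}}(W_1\ox W_2) = h_{\sA_m}(W_1)\,h_{\sA_k}(W_2)$, i.e.\ sub-multiplicativity (indeed equality), which by Lemma~\ref{lema: polar set and support function} is equivalent to (A.4); the identical computation applies to $\sB_m$ with $\cM^*$. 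Finally, the boundedness $D_{\max}(\sA_n\|\sB_n)\leq cn$ is obtained by fixing any pair $\rho_0,\sigma_0\in\density(A)$ with $\supp(\cN(\rho_0))\subseteq\supp(\cM(\sigma_0))$, setting $c = D_{\max}(\cN(\rho_0)\|\cM(\sigma_0))$, and tensorizing: additivity of $D_{\max}$ under tensor products gives $D_{\max}(\sA_n\|\sB_n)\leq D_{\max}(\cN(\rho_0)^{\ox n}\|\cM(\sigma_0)^{\ox n}) = cn$.

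I expect the support-function computation for (A.4) to be the crux, since it is the only place where the specific structure of channel image sets, rather than that of generic convex sets, is used, and where the mild finiteness input for $D_{\max}$ must be supplied; the remaining verifications reduce to covariance of tensor-power maps and the elementary SDP description of $\density(A^m)$. With all hypotheses in place, Theorem~\ref{thm: efficient relative entropy program} yields quantum relative entropy programs of size $O((m+1)^{k^2})$ with $k=\max\{s,d\}$ for $D(\sA_m\|\sB_m)$ and $D_{\Meas}(\sA_m\|\sB_m)$, and hence the claimed $\delta$-approximation of $D^{\inf,\reg}(\cN\|\cM)$.
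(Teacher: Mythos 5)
Your proposal is correct and follows essentially the same route as the paper: both instantiate Theorem~\ref{thm: efficient relative entropy program} with the image sets $\sA_m = \{\cN^{\ox m}(\rho): \rho \in \PSD((\CC^s)^{\ox m}), \tr\rho = 1\}$ (and likewise $\sB_m$), using the SDP representation $\Pi_m = \cN^{\ox m}$, $F_m = \tr$, $g_m = 1$ and the symmetry conditions of Lemma~\ref{lem: permutation invariance restriction 1}. The only difference is that the paper delegates the verification of Assumption~\ref{ass: steins lemma assumptions} to a citation of~\cite{fang2025adversarial}, whereas you carry it out explicitly (in particular the (A.4) check via $h_{\sA_m}(W) = \|(\cN^*)^{\ox m}(W)\|_\infty$ and multiplicativity of the operator norm, and the tensorized $D_{\max}$ bound), which is a sound and complete spelling-out of the same argument.
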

\end{shaded}
\begin{proof}
It is clear that the minimum output channel divergence is the divergence between the image sets of the channels, \begin{align}
    \sA_m & = \{\cN^{\ox m}(\rho): \rho \in \PSD{\left((\CC^{s})^{\ox m}\right)}, \tr \rho = 1\},\\
    \sB_m & = \{\cM^{\ox m}(\rho): \rho \in \PSD{\left((\CC^{s})^{\ox m}\right)}, \tr \rho = 1\}.
\end{align} 
These sets satisfy all the required assumptions in Assumption~\ref{ass: steins lemma assumptions}~\cite{fang2025adversarial}. Moreover, $\sA_m = \{\cN^{\ox m}(\rho): \rho \in \PSD{((\CC^{s})^{\ox m})}, \tr \rho = 1\}$ is a SDP representation of size $s^{2m}$ and $\Pi_m = \cN^{\ox m}$, $F_m = \tr$, $g_m = 1$ satisfy the symmetry conditions in Lemma~\ref{lem: permutation invariance restriction 1}. The same holds for $\sB_m$. Applying Theorem~\ref{thm: efficient relative entropy program}, we have the asserted statement.
\end{proof}

In the following, we provide an explicit example to show that the regularized minimum output channel divergence $D^{\infty,\inf}(\cN\|\cM)$ can be approximated by $D^{\inf}(\cN^{\ox m}\|\cM^{\ox m})/m$ from above and $D_{\Meas}^{\inf}(\cN^{\ox m}\|\cM^{\ox m})/m$ from below, with the approximation improving as $m$ increases.

\begin{figure}[!htb]
    \centering
    \includegraphics[width=15cm]{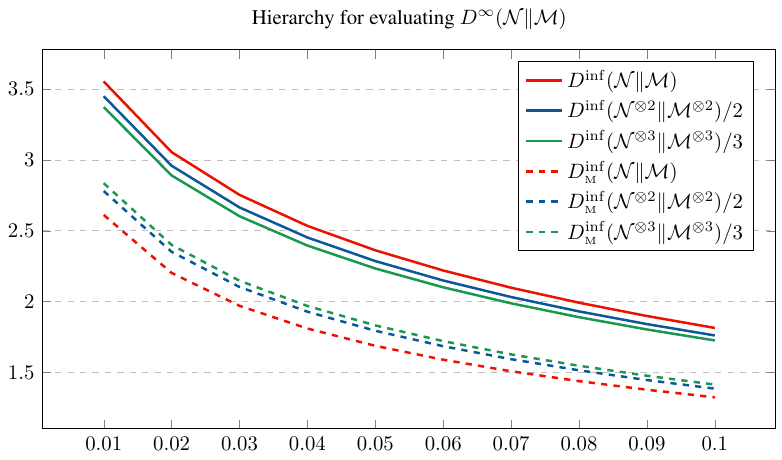}  
    \caption{Estimating the regularized minimum output channel divergences.}
    \label{fig: umegaki_strict_subadditivity}
\end{figure}

This example is given by two qutrit quantum channels. Let $\cN(\cdot) = \tr[\cdot] \ket{\rho}\bra{\rho}$ to be the replacer channel with $\ket{\rho} = (2\ket{0} + \ket{1} + 2\ket{2})/3$. Let $\cM$ be the platypus channel~\cite[Eq.~(170)]{Leditzky_2023}, $\cM(X) = M_0 X M_0^\dagger + M_1 X M_1^\dagger$ with Kraus operators
\begin{align}
    M_0 = \begin{bmatrix}
        \sqrt{p} & 0 & 0\\
        0 & 0 & 0\\
        0 & 1 & 0
    \end{bmatrix},
    \qquad 
    M_1 = \begin{bmatrix}
        0 & 0 & 0\\
        \sqrt{1-p} & 0 & 0\\
        0 & 0 & 1
    \end{bmatrix}.
\end{align}
Since $\cM(I/3) = (p\ket{0}\bra{0} + (1-p)\ket{1}\bra{1} + 2\ket{2}\bra{2})/3$ is a full rank state for all $p \in (0,1)$, the following quantities are finite and can be evaluated by the QICS package~\cite{he2024qics},
\begin{align}
D^{\inf}(\cN^{\ox m}\|\cM^{\ox m}) & = \inf_{\sigma_m \in \density} D(\rho^{\ox m}\|\cM^{\ox m}(\sigma_m)),\\
D_{\Meas}^{\inf}(\cN^{\ox m}\|\cM^{\ox m}) & = \inf_{\sigma_m \in \density} D_{\Meas}(\rho^{\ox m}\|\cM^{\ox m}(\sigma_m)).
\end{align}

The numerical result is given in Figure~\ref{fig: umegaki_strict_subadditivity}. It shows a clear separation between upper bounds with $m=1,2,3$ and $p \in [0.01,0.1]$ and also lower bounds with the same parameter range, confirming the strict subadditivity of the minimum output Umegaki channel divergence and the strict superadditivity of the minimum output measured channel divergence. Moreover, as we increase the number of $m$, the lower and upper bounds provide better approximation to $D^{\infty,\inf}(\cN\|\cM)$.  

\subsection{Entanglement cost for quantum states and channels}
\label{sec: Entanglement cost for quantum states and channels}

The \emph{entanglement cost} of a quantum state, denoted as $E_{C,\Omega}$, is the minimum number of Bell states required to prepare one copy of this state under a class of operations $\Omega$. Of particular interest is the local operation and classical communication (LOCC) operations. It is known that computing $E_{C,\text{LOCC}}$ is NP-hard in general~\cite[Theorem 1]{Huang2014}. Therefore, finding efficiently computable lower and upper bounds to estimate $E_{C,\text{LOCC}}$ is of fundamental importance. Here, we focus on deriving lower bounds, which indicate that no matter what preparation strategies are used, the amount of entanglement consumed cannot be smaller than this value.

There are several lower bounds in the literature, but they are unsatisfactory for different reasons. One such lower bound is given by the regularized PPT-relative entropy of entanglement~\cite[Eq. (8.235)]{hayashi2017quantum},
\begin{align}\label{eq: ec eppt lower bound}
E_{C,\text{LOCC}}(\rho) \geq D^\reg(\rho\|\PPT):= \lim_{n\to \infty} \frac1n D(\rho^{\ox n}\|{\PPT}(A^n:B^n)),
\end{align}
which is difficult to evaluate due to its regularization. Note that the $\PPT$  set does not satisfy the assumption (A.4) and therefore cannot directly apply Theorem~\ref{thm: efficient relative entropy program}. A counter-example can be given by the projector on the $\CC^3\otimes \CC^3$ antisymmetric subspace, denoted as $\rho^a$. Since the support function $h_{\cvxset_{\PPT}}(\cdot)$ is given by a semidefinite program, we find numerically that $h_{{\PPT}}((\rho^a)^{\ox 2}) - h_{{\PPT}}(\rho^a)^2 \approx 0.0093 > 0$.

A single-letter lower bound for entanglement cost is provided by the quantum squashed entanglement \cite{christandl2004squashed}, but its computability remains uncertain due to the unbounded dimension of the extension system. Wang and Duan have proposed two single-letter SDP lower bounds in separate works:
\begin{align}
E_{\WD,1}(\rho_{AB}) & := -\log \max \tr \Pi_{\rho} V_{AB} \quad \ \text{s.t.}\ \|V_{AB}^{\sfT_B}\|_1 = 1,\ V_{AB} \geq 0, \quad \ ~\cite{Wang2017}\\
E_{\WD,2}(\rho_{AB}) & :=  -\log \min \ \ \|Y_{AB}^{\sfT_B}\|_{\infty}
 \quad \ \ \text{s.t.}  \  -Y_{AB} \leq \Pi_\rho^{\sfT_B} \leq Y_{AB}. \quad \ \ ~\cite{Wang2017irreversibility}\label{eq: definition of E eta}
\end{align}
It has been shown that for any $\rho \in \density(AB)$,
\begin{align}\label{eq: Ec lower bound E eta}
 \max\big\{E_{\WD,1}(\rho), E_{\WD,2}(\rho)\big\} \leq D^\reg(\rho\|\PPT) \leq E_{C,\text{LOCC}}(\rho).
\end{align}
By the definition of min-relative entropy and the Rains set, we can write~\footnote{The original definition of $E_{\WD,1}$ imposes the condition $\|V_{AB}^{\sfT_B}\|_1 = 1$. However, it is equivalent to optimize over the condition $\|V_{AB}^{\sfT_B}\|_1 \leq 1$, as the optimal solution can always be chosen at the boundary.}
\begin{align}
E_{\WD,1}(\rho_{AB}) = D_{\min}(\rho_{AB}\|\Rains(A:B)).
\end{align}
After a detailed examination of the dual program for $E_{\WD,2}$, we can also reformulate it in a comparable structure by introducing an appropriate set of operators. 

\begin{lemma}\label{lem: E eta distance formula}
For any $\rho \in \density(AB)$, it holds that
\begin{align}
E_{\WD,2}(\rho_{AB}) = D_{\min}(\rho_{AB}\|\WD(A:B)),
\end{align}
with the set of Hermitian operators defined by \begin{align}
\WD(A:B) := \big\{\sigma \in \HERM(AB) \,:\, \exists\,Y \in \HERM(AB),\ \text{s.t.}\ -Y \leq \sigma^{\sfT_B} \leq Y, \|Y^{\sfT_B}\|_{1} \leq 1\big\}.
\end{align}
\end{lemma}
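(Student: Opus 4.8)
The plan is to recognize $E_{\WD,2}$ as a semidefinite program and to compute its SDP dual, showing that the dual optimization is exactly $\sup_{\sigma\in\WD}\tr[\Pi_\rho\sigma]=h_\WD(\Pi_\rho)$. Since $D_{\min}(\rho\|\WD)=\inf_{\sigma\in\WD}-\log\tr[\Pi_\rho\sigma]=-\log\sup_{\sigma\in\WD}\tr[\Pi_\rho\sigma]$, strong duality will then immediately give the claimed identity $E_{\WD,2}(\rho)=D_{\min}(\rho\|\WD)$.

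First I would rewrite the primal in standard conic form. Using $\|Y^{\sfT_B}\|_\infty\le t\iff -tI\le Y^{\sfT_B}\le tI$, we have $E_{\WD,2}(\rho)=-\log p^\star$ where $p^\star=\min\{t: tI\mp Y^{\sfT_B}\ge 0,\ Y\mp\Pi_\rho^{\sfT_B}\ge 0\}$ optimized over $(Y,t)\in\HERM(AB)\times\RR$. I would attach PSD multipliers $P,Q\ge 0$ to the two constraints $tI\mp Y^{\sfT_B}\ge 0$ and $R,S\ge 0$ to $Y\mp\Pi_\rho^{\sfT_B}\ge 0$, form the Lagrangian, and minimize over the free variables $t$ and $Y$. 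The key computational tool is that the partial transpose is an involution that is self-adjoint for the trace inner product, $\tr[AB^{\sfT_B}]=\tr[A^{\sfT_B}B]$; this lets me move the transpose off $Y^{\sfT_B}$. Stationarity in $t$ forces $\tr[P]+\tr[Q]=1$, stationarity in $Y$ forces $R+S=(P-Q)^{\sfT_B}$, and the dual objective collapses to $\tr[(R-S)^{\sfT_B}\Pi_\rho]$.

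Setting $\sigma:=(R-S)^{\sfT_B}$ and $Y':=R+S$, I would then show that the dual feasible set in the variable $\sigma$ is exactly $\WD(A:B)$, via two equivalences. First, $\{(\sigma,Y'):R,S\ge 0,\ R-S=\sigma^{\sfT_B},\ R+S=Y'\}$ coincides with $\{(\sigma,Y'):-Y'\le\sigma^{\sfT_B}\le Y'\}$, using $R=(Y'+\sigma^{\sfT_B})/2\ge 0$ and $S=(Y'-\sigma^{\sfT_B})/2\ge 0$ (note $Y'\ge 0$ is forced). Second, $\{(P-Q)^{\sfT_B}:P,Q\ge 0,\ \tr[P]+\tr[Q]=1\}=\{Y':\|(Y')^{\sfT_B}\|_1\le 1\}$: the inclusion $\subseteq$ follows from $\|P-Q\|_1\le\tr[P]+\tr[Q]$, while $\supseteq$ follows from the Jordan decomposition $(Y')^{\sfT_B}=W_+-W_-$ combined with padding $P=W_++\alpha E$, $Q=W_-+\alpha E$ ($E\ge 0$) to absorb the nonnegative slack $1-\|(Y')^{\sfT_B}\|_1$ into the trace normalization. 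Combining the two, the dual reads $d^\star=\max\{\tr[\sigma\Pi_\rho]:\exists\,Y'\ \text{s.t.}\ -Y'\le\sigma^{\sfT_B}\le Y',\ \|(Y')^{\sfT_B}\|_1\le 1\}=\sup_{\sigma\in\WD}\tr[\Pi_\rho\sigma]=h_\WD(\Pi_\rho)$.

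Finally I would verify Slater's condition so that strong duality $p^\star=d^\star$ holds with attainment: the primal is strictly feasible (take $Y$ a sufficiently large multiple of $I$, which makes all four operator inequalities strict, and $t$ large), so $E_{\WD,2}(\rho)=-\log p^\star=-\log\sup_{\sigma\in\WD}\tr[\Pi_\rho\sigma]=D_{\min}(\rho\|\WD)$. The main obstacle is bookkeeping around the partial transpose: because $\sfT_B$ is not positivity-preserving one cannot transpose operator inequalities directly, so every manipulation must route through the self-adjointness identity and the involution property, and one must confirm that the trace-norm constraint $\|(Y')^{\sfT_B}\|_1\le 1$ emerges correctly from the $(P,Q)$ trace normalization rather than being assumed.
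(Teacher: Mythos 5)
Your proposal is correct and follows essentially the same route as the paper: the paper also dualizes the SDP for $E_{\WD,2}$ via the Lagrangian (obtaining $\max \tr \Pi_\rho (V-F)^{\sfT_B}$ subject to $V+F=(W-X)^{\sfT_B}$, $\tr(W+X)\leq 1$, $V,F,W,X\geq 0$) and makes exactly your change of variables $\sigma=(V-F)^{\sfT_B}$, $Y=V+F$, with the trace-norm constraint $\|Y^{\sfT_B}\|_1\leq 1$ emerging from the trace normalization just as in your Jordan-decomposition-plus-padding argument. Your write-up is merely more explicit than the paper's on two points it leaves implicit: verifying Slater's condition for strong duality and proving both inclusions identifying the dual feasible set with $\WD(A:B)$.
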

\begin{proof}
Using the Lagrangian method, we have the dual SDP of $E_{\WD,2}$ as
\begin{gather}
E_{\WD,2}(\rho) = -\log \max \ \tr \Pi_{\rho} (V-F)^{\sfT_B} \quad \text{s.t.} \notag\\
V+F = (W-X)^{\sfT_B},\, \tr (W+X) \leq 1,\, V,F,W,X \geq 0.
\end{gather}
Let $(V-F)^{\sfT_B} = \sigma$, $V+F = Y$.
By the definition of $D_{\min}$ and $\cvxset_\WD$, we have the desired result.
\end{proof}

As discussed above, both SDP bounds $E_{\WD,1}$ and $E_{\WD,2}$ are essentially entanglement measures induced by the min-relative entropy. However, a significant limitation of these bounds is that they vanish for any full-rank state.

\begin{shaded}
\begin{proposition}\label{prop: WD bounds zero}
For any full rank state $\rho \in \density(AB)$, it holds that
\begin{align}
E_{\WD,1}(\rho) = E_{\WD,2}(\rho) = 0.    
\end{align}
\end{proposition}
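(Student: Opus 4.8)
The plan is to exploit the two identities $E_{\WD,1}(\rho) = D_{\min}(\rho\|\Rains(A:B))$ and $E_{\WD,2}(\rho) = D_{\min}(\rho\|\WD(A:B))$ established just above, together with the explicit form $D_{\min}(\rho\|\sigma) = -\log \tr[\Pi_\rho \sigma]$. The crucial simplification is that when $\rho$ is full rank its support is the whole space, so $\Pi_\rho = I_{AB}$, and hence $D_{\min}(\rho\|\sB) = \inf_{\sigma \in \sB} -\log\tr[\sigma] = -\log \sup_{\sigma\in\sB}\tr[\sigma]$ for any set $\sB$. Thus both equalities reduce to computing $\sup_{\sigma\in\sB}\tr[\sigma]$ for $\sB = \Rains(A:B)$ and for $\sB = \WD(A:B)$, and it suffices to show this supremum equals $1$ in each case.

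For the upper bound I would use that the partial transpose preserves the trace, $\tr[\sigma] = \tr[\sigma^{\sfT_B}]$, and that the trace of a Hermitian operator is dominated by its trace norm. For the Rains set, a feasible $\sigma \geq 0$ with $\|\sigma^{\sfT_B}\|_1 \leq 1$ gives directly $\tr[\sigma] = \tr[\sigma^{\sfT_B}] \leq \|\sigma^{\sfT_B}\|_1 \leq 1$. For the WD set, given a feasible $\sigma$ with witness $Y$ satisfying $-Y \leq \sigma^{\sfT_B} \leq Y$ and $\|Y^{\sfT_B}\|_1 \leq 1$, the two-sided bound forces $Y \geq 0$ and in particular $\sigma^{\sfT_B} \leq Y$; chaining $\tr[\sigma] = \tr[\sigma^{\sfT_B}] \leq \tr[Y] = \tr[Y^{\sfT_B}] \leq \|Y^{\sfT_B}\|_1 \leq 1$ then yields the same bound.

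For achievability I would exhibit a single feasible point of trace one in each set, and the maximally mixed state $\sigma = I_{AB}/\dim(AB)$ is the natural choice: it is separable, hence PPT, so $\sigma^{\sfT_B} = \sigma \geq 0$ and $\|\sigma^{\sfT_B}\|_1 = \tr[\sigma] = 1$, placing it in $\Rains(A:B)$; taking the witness $Y = \sigma^{\sfT_B}$ shows $\sigma \in \WD(A:B)$ as well, since $-Y \leq \sigma^{\sfT_B} = Y$ and $\|Y^{\sfT_B}\|_1 = 1$. Combining the two directions gives $\sup_{\sigma\in\sB}\tr[\sigma] = 1$ in both cases, hence $E_{\WD,1}(\rho) = E_{\WD,2}(\rho) = -\log 1 = 0$.

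There is no deep obstacle here; the only point requiring a little care is the WD case, where one must first extract positivity of $Y$ and the one-sided domination $\sigma^{\sfT_B} \leq Y$ from the two-sided constraint before bounding traces, and must confirm that the maximally mixed state is genuinely admissible as its own witness. The argument also makes transparent \emph{why} these min-relative-entropy bounds are insensitive to full-rank states: the projector $\Pi_\rho$ degenerates to the identity, collapsing the geometric distinguishability into a pure trace constraint that any normalized PPT state already saturates.
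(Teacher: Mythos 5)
Your proof is correct and takes essentially the same route as the paper's: with $\rho$ full rank you set $\Pi_\rho = I_{AB}$, bound $\tr\sigma = \tr\sigma^{\sfT_B} \leq \|\sigma^{\sfT_B}\|_1 \leq 1$ for the Rains-type constraint and chain through $\tr Y = \tr Y^{\sfT_B} \leq \|Y^{\sfT_B}\|_1 \leq 1$ for the WD constraint, then exhibit the maximally mixed state (serving as its own witness $Y$) as a trace-one feasible point. Routing the argument through the $D_{\min}$ reformulations rather than directly through the primal/dual SDP expressions, as the paper does, is only a cosmetic difference — the computations are identical.
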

\end{shaded}
\begin{proof}
If $\rho_{AB}$ is full rank, then $\Pi_\rho = I_{AB}$ and thus 
\begin{align}
E_{\WD,1}(\rho_{AB})  = -\log \max \big\{\tr \sigma_{AB} :\; \sigma_{AB} \geq 0, \|\sigma_{AB}^{\sfT_B}\|_1 \leq 1\big\}.
\end{align}
It is clear that for any feasible solution $\sigma_{AB}$ it holds that $\tr \sigma_{AB} = \tr \sigma_{AB}^{\sfT_B} \leq \|\sigma_{AB}^{\sfT_B}\|_1 \leq 1$. On the other hand, there is a feasible solution $\sigma_{AB} = I_{AB}/|AB|$ such that $\tr \sigma_{AB} = 1$. Thus the maximization is taken at $\tr \sigma_{AB} = 1$ and thus 
$E_{\WD,1}(\rho_{AB}) = 0$. Similarly, given full rank $\rho_{AB}$, it holds that \begin{align}
E_{\WD,2}(\rho_{AB})  = -\log \max \big\{ \tr \sigma_{AB} :\; \sigma, Y \in \HERM(AB), -Y_{AB} \leq \sigma_{AB}^{\sfT_B} \leq Y_{AB}, \|Y^{\sfT_B}\|_1 \leq 1 \big\}.   
\end{align} 
Then for any feasible solutions $\sigma_{AB},Y_{AB}$, it holds that $\tr \sigma_{AB} = \tr \sigma_{AB}^{\sfT_B} \leq \tr Y_{AB} = \tr Y_{AB}^{\sfT_B} \leq \|Y^{\sfT_B}\|_1 \leq 1$. On the other hand, considering the feasible solution $\sigma_{AB} = Y_{AB} = I_{AB}/|AB|$, we have $\tr \sigma_{AB} = 1$. Thus the maximization is taken at $\tr \sigma_{AB} = 1$ and $E_{\WD,2}(\rho_{AB}) = 0$.
\end{proof}

Recently, Wang et al. introduced the $\PPT_k$ set~\cite{wang2023computable} as 
\begin{align}\label{eq: PPTk set}
\PPT_k(A:B):=\left\{\omega_1 \geq 0: \exists \{\omega_i\}_{i=2}^k,\ \text{s.t.}\ \|\omega_i^{\sfT_B}\|_*\leq \omega_{i+1}, \forall i \in [1:k-1], \|\omega_k^{\sfT_B}\|_1 \leq 1\right\},
\end{align}
where $k \in \NN_+$ and $|X|_*\leq Y$ denotes $-Y\leq X \leq Y$. This set turns out to be related to the quantity $\chi_p$ developed in~\cite{Lami2025}. Building on this set, the authors introduced an efficiently computable lower bound for entanglement cost~\cite{wang2023computable},
\begin{align}
  E_{C,\text{LOCC}}(\rho_{AB}) \geq  E_{\text{WJZ}}(\rho_{AB}):= D_{\Sand,1/2}(\rho_{AB}\|{\PPT_k}(A:B)).
\end{align}

In the following, we show that $\PPT_k$ satisfies Assumption~\ref{ass: steins lemma assumptions} and thus we can apply our Theorem~\ref{thm: efficient relative entropy program} to get an improved bound.

\begin{lemma}\label{lem: PPTk assmuption check}
Let $k \geq 2$. The $\PPT_k$ set defined in Eq.~\eqref{eq: PPTk set} satisfies Assumption~\ref{ass: steins lemma assumptions}.
\end{lemma}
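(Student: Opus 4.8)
The plan is to verify conditions (A.1)--(A.4) of Assumption~\ref{ass: steins lemma assumptions} for the family $\PPT_k^{(n)} := \PPT_k(A^n:B^n)$, where essentially all the content is in the two tensor-stability conditions (A.3) and (A.4). The single technical workhorse I would isolate first is the following elementary tensor interval lemma: if $A,B \succeq 0$ and $X,Y$ are Hermitian with $-A \le X \le A$ and $-B \le Y \le B$, then $-A\ox B \le X\ox Y \le A\ox B$. For $A,B\succ0$ this is immediate from the congruence $A\ox B \mp X\ox Y = (A^{1/2}\ox B^{1/2})(I \mp \tilde X\ox\tilde Y)(A^{1/2}\ox B^{1/2})$ with $\tilde X = A^{-1/2}XA^{-1/2}$, $\tilde Y = B^{-1/2}YB^{-1/2}$ of operator norm at most one, so $\|\tilde X\ox\tilde Y\|_\infty\le1$ and the middle factor is positive semidefinite; the general case follows by perturbing $A\mapsto A+\ve I$ and letting $\ve\to0^+$. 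Note also that each interval constraint $-\omega_{i+1}\le\omega_i^{\sfT_B}\le\omega_{i+1}$ forces $\omega_{i+1}\succeq0$.

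For (A.1), the chain constraints defining $\PPT_k$ (the conic constraint $\omega_1\succeq0$, the operator intervals, and $\|\omega_k^{\sfT_B}\|_1\le1$) are convex in $(\omega_1,\dots,\omega_k)$ since $\sfT_B$ is linear and the trace norm is convex; hence the feasible region is convex and $\PPT_k$, being its image under the linear projection onto $\omega_1$, is convex. Boundedness is obtained by descending the chain: $\|\omega_k^{\sfT_B}\|_1\le1$ bounds $\omega_k$ (all norms are equivalent and $\sfT_B$ is a linear bijection), and $-\omega_{i+1}\le\omega_i^{\sfT_B}\le\omega_{i+1}$ bounds $\omega_i$ in terms of $\omega_{i+1}$; thus the full feasible region is compact and $\PPT_k$ is its continuous image. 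For (A.2), the key point is that the simultaneous permutation $P_\pi$ of the $n$ copies of $\cH_A\ox\cH_B$ commutes with $\sfT_{B^n}$, i.e. $(P_\pi X P_\pi^\dagger)^{\sfT_{B^n}} = P_\pi X^{\sfT_{B^n}} P_\pi^\dagger$ (transposing a real permutation operator returns its inverse). Since conjugation by $P_\pi$ preserves positivity and operator inequalities and leaves the trace norm invariant, a witness chain $(\omega_1,\dots,\omega_k)$ is carried to the chain $(P_\pi\omega_iP_\pi^\dagger)_i$, proving permutation invariance.

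For (A.3), given $\omega_1\in\PPT_k^{(m)}$ with chain $(\omega_i)$ and $\tau_1\in\PPT_k^{(n)}$ with chain $(\tau_i)$, I would propose the product chain $\theta_i := \omega_i\ox\tau_i$ as a witness for $\omega_1\ox\tau_1$. Using $(\omega_i\ox\tau_i)^{\sfT_{B^{m+n}}} = \omega_i^{\sfT_{B^m}}\ox\tau_i^{\sfT_{B^n}}$, the interval constraints follow from the tensor lemma applied with $A=\omega_{i+1}\succeq0$, $B=\tau_{i+1}\succeq0$; $\theta_1\succeq0$ is immediate; and the terminal constraint follows from multiplicativity of the trace norm, $\|\omega_k^{\sfT_{B^m}}\ox\tau_k^{\sfT_{B^n}}\|_1 = \|\omega_k^{\sfT_{B^m}}\|_1\,\|\tau_k^{\sfT_{B^n}}\|_1\le1$.

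For (A.4), I would invoke Lemma~\ref{lema: polar set and support function} to reduce the polar tensor-stability to submultiplicativity of the support function, namely $h_{\PPT_k^{(m+n)}}(\omega\ox\omega')\le h_{\PPT_k^{(m)}}(\omega)\,h_{\PPT_k^{(n)}}(\omega')$ for $\omega,\omega'\succeq0$. The plan is to compute the SDP dual of $h_{\PPT_k}(\omega)=\sup\{\tr[\omega\omega_1]:\omega_1\in\PPT_k\}$; after linearizing $\|\cdot\|_1\le1$ by an auxiliary operator, the dual is again a chain program, $\min\{\lambda : R_1^{\sfT_B}\succeq\omega,\ -R_{i+1}^{\sfT_B}\le R_i\le R_{i+1}^{\sfT_B}\ (i<k),\ -\lambda I\le R_k\le\lambda I\}$, whose weak duality against the primal follows by telescoping the inequalities $\tr[R_i\,\omega_i^{\sfT_B}]\le\tr[R_{i+1}\,\omega_{i+1}^{\sfT_B}]$. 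Taking optimal dual chains $(R_i)$ for $\omega$ and $(R'_i)$ for $\omega'$ (with values $\lambda,\lambda'$ equal to the support functions by strong duality, which holds via Slater), the product chain $R_i\ox R'_i$ is dual-feasible for $\omega\ox\omega'$ with value $\lambda\lambda'$: the first constraint uses that $A\succeq\omega\succeq0$, $A'\succeq\omega'\succeq0$ imply $A\ox A'\succeq\omega\ox\omega'$ because $A\ox A'-\omega\ox\omega' = (A-\omega)\ox A' + \omega\ox(A'-\omega')\succeq0$, while the interval constraints again use the tensor lemma; weak duality then gives the claimed submultiplicativity. The main obstacle is precisely this step (A.4): identifying the dual and recognizing that it inherits the same recursive interval structure as the primal, which is what makes the product-chain construction close; the tensor interval lemma is the shared crux of (A.3) and (A.4), and the only loose end is verifying the constraint qualification so that the dual optimum equals $h_{\PPT_k}(\omega)$.
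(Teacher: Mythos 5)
Your proposal is correct and takes essentially the same route as the paper: the same tensor interval lemma (if $-A \le X \le A$, $-B \le Y \le B$ with $A,B \succeq 0$ then $-A\ox B \le X \ox Y \le A \ox B$) handles (A.3), and (A.4) is proved by dualizing the support function via SDP duality and tensoring optimal dual solutions — your interval-form dual chain $(R_i)$ is exactly the paper's Lagrangian dual in the variables $R_i = \beta_i - \alpha_i$ after eliminating $\alpha_i + \beta_i$, and your product chain $R_i \ox R_i'$ coincides with the paper's construction since $\beta_i^3 - \alpha_i^3 = (\beta_i^1-\alpha_i^1)\ox(\beta_i^2-\alpha_i^2)$. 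The one loose end you flag, strong duality, is easily closed (and is left implicit in the paper too): the primal chain program is strictly feasible, e.g.\ $\omega_i$ proportional to the identity with suitably growing small coefficients, so Slater's condition applies.
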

\begin{proof}
It is clear to check that $\PPT_k$ satisfies (A.1) and (A.2). Moreover, suppose $-X_1\leq Y_1 \leq X_1$ and $-X_2\leq Y_2 \leq X_2$, we have $-I \leq X_{1}^{-1/2} Y_1 X_{1}^{-1/2} \leq I$ and $-I \leq X_{2}^{-1/2} Y_2 X_{2}^{-1/2} \leq I$
where the inverses are taken on the supports. This gives $-I \leq X_{1}^{-1/2} Y_1 X_{1}^{-1/2} \ox X_{2}^{-1/2} Y_2 X_{2}^{-1/2} \leq I$
which is equivalent to $-X_{1} \ox X_2 \leq Y_1\ox Y_2 \leq X_1\ox X_2$. Using this result, we can check that $\PPT_k$ satisfies (A.3). As for the assumption (A.4), recall the variational characterization of the trace norm $\|X\|_1 = \min\{\tr Y: -Y \leq X \leq Y\}$~\cite[Lemma S1]{lami2024computable}. We have that 
\begin{align}
    \PPT_k(A:B)=\left\{\omega_1 \geq 0: \exists \{\omega_i\}_{i=2}^k,\ \text{s.t.}\ \|\omega_i^{\sfT_B}\|_*\leq \omega_{i+1}, \forall i \in [1:k], \tr \omega_{k+1}\leq  1\right\}.
\end{align}
The support function is $h_{\PPT_k}(\omega) = \sup_{\omega_1 \in \PPT_k} \tr[\omega \omega_1]$. The Lagrange multiplier is given by
\begin{align}
    & \tr[\omega \omega_1] + \sum_{i=1}^k \tr[\alpha_i(\omega_i^{\sfT_B} + \omega_{i+1})] + \sum_{i=1}^k \tr[\beta_i(\omega_{i+1}-\omega_i^{\sfT_B})] + t (1-\tr[\omega_{k+1}])\\
    & = \tr[\omega_1(\omega + \alpha_1^{\sfT_B} - \beta_1^{\sfT_B})] + \sum_{i=2}^k \tr[\omega_i(\alpha_{i-1}+\beta_{i-1}+\alpha_{i}^{\sfT_B}-\beta_i^{\sfT_B})]\\
    & \hspace{8cm} + \tr[\omega_{k+1}(\alpha_k+\beta_k-tI)] + t.\notag
\end{align}
By the SDP duality, we have the support function \begin{align}
    h_{\PPT_k}(\omega) =  \inf \Big\{ t \geq 0: \  \omega \leq \beta_1^{\sfT_B} - \alpha_1^{\sfT_B}, &\alpha_{i-1}+\beta_{i-1} \leq \beta_{i}^{\sfT_B} - \alpha_i^{\sfT_B}, \forall i \in [2:k], \notag\\
    &\alpha_k+\beta_k \leq t I, \alpha_i \geq 0, \beta_i \geq 0, \forall i \in [1:k]\Big\}.
\end{align}
Then for any $\omega_1, \omega_2$, assume their optimal solutions in the dual program are respectively given by $\{t^1,\alpha_i^1,\beta_i^1\}$ and $\{t^2,\alpha_i^2,\beta_i^2\}$. Then we construct
\begin{align}
    t^3 & := t^1t^2,\\
    \alpha_i^3 & := \alpha_i^1\ox \beta_i^2 + \beta_i^1\ox \alpha_i^2,\\
    \beta_i^3 & := \alpha_i^1\ox \alpha_i^2 + \beta_i^1\ox \beta_i^2.
\end{align}
Note that
\begin{align}
    (\alpha_i^1+\beta_i^1)\ox (\alpha_i^2 + \beta_i^2) & = \alpha_i^3 + \beta_i^3,\\
    (\beta_i^1-\alpha_i^1)^{\sfT_B}\ox (\beta_i^2 - \alpha_i^2)^{\sfT_B} & = (\beta_i^3 - \alpha_i^3)^{\sfT_B}.
\end{align}
Then it is clear that $\{t^3, \alpha_i^3,\beta_i^3\}$ is a feasible solution for $h_{\PPT_k}(\omega_1\ox \omega_2)$. This implies that $h_{\PPT_k}(\omega_1\ox \omega_2) \leq t^3 = t^1t^2 = h_{\PPT_k}(\omega_1)h_{\PPT_k}(\omega_2)$, and proves that $\PPT_k$ satisfies assumption (A.4) by Lemma~\ref{lema: polar set and support function}. This completes the proof.
\end{proof}

\begin{shaded}
\begin{theorem}\label{prop: E eta plus set }
Let $\rho \in \density(AB)$ and $k \geq 2$. Let $(*) = \max\left\{E_{\WD,1}(\rho), E_{\WD,2}(\rho), E_{\text{\rm WJZ}}(\rho)\right\}$ represent the previously known bounds. Then it holds that
\begin{align}
(*) \leq D_{\Meas}(\rho\|\PPT_k) \leq D^\reg(\rho\|\PPT_k) \leq D^\reg(\rho\|\PPT) \leq E_{C,\text{\rm LOCC}}(\rho).
\end{align}
Moreover, $D_{\Meas}(\rho\|\PPT_k)$ can be seen as the first level of approximation to $D^\reg(\rho\|\PPT_k)$ and both quantities can be efficiently estimated.
\end{theorem}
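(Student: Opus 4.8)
The plan is to establish the five-term chain inequality by inequality, relying on three ingredients: inclusions among the relevant sets $\PPT$, $\PPT_k$, $\Rains$ and $\WD$; the ordering of divergences from Lemma~\ref{thm: comparison of quantum divergence}; and the generalized AEP sandwich bounds from Lemma~\ref{lem: generalized AEP finite estimate}. The key structural facts I would prove first are the monotonicity $\PPT_{k+1} \subseteq \PPT_k$ (so that $\PPT_k \subseteq \PPT_2 \subseteq \PPT_1 = \Rains$ for all $k\ge 2$) together with the two inclusions $\PPT_2 \subseteq \WD$ and $\PPT(A^n:B^n) \subseteq \PPT_k(A^n:B^n)$. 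Each follows from the variational characterization $\|X\|_1 = \min\{\tr Y : -Y \le X \le Y\}$ combined with $\tr Z = \tr Z^{\sfT_B} \le \|Z^{\sfT_B}\|_1$: e.g.\ for $\omega_1 \in \PPT_{k+1}$ the penultimate constraint $-\omega_{k+1}\le \omega_k^{\sfT_B}\le\omega_{k+1}$ gives $\|\omega_k^{\sfT_B}\|_1 \le \tr\omega_{k+1} \le \|\omega_{k+1}^{\sfT_B}\|_1 \le 1$, collapsing the defining chain by one step; $\PPT_2 \subseteq \WD$ is immediate from matching the defining constraints (identifying $Y=\omega_2$); and any normalized PPT state $\sigma$ lies in every $\PPT_k$ by taking $\omega_i = \sigma$ or $\sigma^{\sfT_B}$ alternately.

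With these inclusions in hand, the leftmost inequality $(*) \le D_{\Meas}(\rho\|\PPT_k)$ reduces to treating the three known bounds separately. Since $\PPT_k \subseteq \Rains$ and $\PPT_k \subseteq \WD$, enlarging the second set argument can only decrease an infimum-over-sets divergence, so $E_{\WD,1} = D_{\min}(\rho\|\Rains) \le D_{\min}(\rho\|\PPT_k)$ and, using Lemma~\ref{lem: E eta distance formula}, $E_{\WD,2} = D_{\min}(\rho\|\WD) \le D_{\min}(\rho\|\PPT_k)$; then $D_{\min} \le D_{\Meas}$ from Lemma~\ref{thm: comparison of quantum divergence} finishes both. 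For $E_{\text{WJZ}} = D_{\Sand,1/2}(\rho\|\PPT_k)$ I would recall that Alberti's theorem makes the sandwiched and measured \Renyi divergences coincide at order one half, i.e.\ $D_{\Sand,1/2}(\rho\|\sigma) = D_{\Meas,1/2}(\rho\|\sigma) = -\log F(\rho,\sigma)$ for every $\sigma \in \PSD$; taking the infimum over $\sigma\in\PPT_k$ gives $E_{\text{WJZ}} = D_{\Meas,1/2}(\rho\|\PPT_k) \le D_{\Meas}(\rho\|\PPT_k)$, the last step again by the monotonicity in $\alpha$ recorded in Lemma~\ref{thm: comparison of quantum divergence}. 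Taking the maximum over the three terms yields $(*) \le D_{\Meas}(\rho\|\PPT_k)$.

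For the middle inequality $D_{\Meas}(\rho\|\PPT_k) \le D^\reg(\rho\|\PPT_k)$ I would invoke Lemma~\ref{lem: generalized AEP finite estimate} at $m=1$ with the singleton sequence $\sA_n = \{\rho^{\ox n}\}$ and $\sB_n = \PPT_k(A^n:B^n)$. The set $\PPT_k$ satisfies Assumption~\ref{ass: steins lemma assumptions} by Lemma~\ref{lem: PPTk assmuption check}, and the singleton sequence satisfies it trivially: its support function $h_{\{\rho^{\ox n}\}}(X)=\tr[\rho^{\ox n}X]$ is exactly multiplicative, so (A.4) holds via Lemma~\ref{lema: polar set and support function}. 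The remaining hypothesis $D_{\max}(\rho^{\ox n}\|\PPT_k) \le cn$ I would verify by exhibiting the full-rank element $I/(|A||B|)\in\PPT_k$ and using additivity of $D_{\max}$ under tensor powers, giving $c = \log(|A||B|)$. The third inequality $D^\reg(\rho\|\PPT_k) \le D^\reg(\rho\|\PPT)$ then follows termwise from $\PPT \subseteq \PPT_k$, and the last inequality is the known bound in Eq.~\eqref{eq: ec eppt lower bound}.

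Finally, the ``moreover'' claim is immediate from the framework: $D_{\Meas}(\rho\|\PPT_k)$ is the $m=1$ instance of the lower sandwich bound in~\eqref{eq: generalized AEP finite estimate}, and since both $\{\rho^{\ox n}\}$ and $\PPT_k$ admit permutation-symmetric SDP descriptions, Theorem~\ref{thm: efficient relative entropy program} supplies efficient quantum relative entropy programs for $D_{\Meas}(\rho^{\ox m}\|\PPT_k)$ and $D(\rho^{\ox m}\|\PPT_k)$, and hence for the approximation of $D^\reg(\rho\|\PPT_k)$. I expect the main obstacle to be the bookkeeping of the $\PPT_k$ inclusions --- in particular getting the direction of monotonicity right and the clean one-step collapse of the defining chain --- since everything downstream hinges on $\PPT_k$ sitting inside both $\Rains$ and $\WD$ while simultaneously containing $\PPT$. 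The divergence-ordering and AEP steps are then essentially invocations of the cited results, the one genuinely non-mechanical observation being the Alberti identification $D_{\Sand,1/2}=D_{\Meas,1/2}$ that lets $E_{\text{WJZ}}$ be absorbed into $D_{\Meas}$.
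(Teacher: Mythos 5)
Your proposal is correct and takes essentially the same route as the paper's proof: the set-inclusion chain $\PPT \subseteq \PPT_k \subseteq \PPT_2 \subseteq \WD \cap \Rains$ (with $\PPT_1 = \Rains$), the divergence ordering of Lemma~\ref{thm: comparison of quantum divergence}, the AEP sandwich of Lemma~\ref{lem: generalized AEP finite estimate} applied to $\sA_n = \{\rho^{\ox n}\}$ and $\sB_n = \PPT_k(A^n:B^n)$, and Lemma~\ref{lem: PPTk assmuption check} together with Theorem~\ref{thm: efficient relative entropy program} for efficiency. The only differences are that you re-derive ingredients the paper cites or leaves implicit --- you prove the inclusion chain directly via the trace-norm variational characterization rather than citing \cite[Prop.~S4]{wang2023computable} and the identity $\PPT_2 = \WD$, you verify the $D_{\max}(\rho^{\ox n}\|\PPT_k) \leq cn$ hypothesis explicitly with $c = \log(|A||B|)$, and you spell out the bridge $D_{\Sand,1/2} = D_{\Meas,1/2} = -\log F$ needed to absorb $E_{\text{WJZ}}$ into $D_{\Meas}$ (which the paper subsumes in its appeal to Lemma~\ref{thm: comparison of quantum divergence}, whose proof contains the Alberti identification) --- all of which are sound.
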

\end{shaded}

\begin{proof}
It has been shown that~\cite[Proposition S4]{wang2023computable},
\begin{align}\label{eq: PPTk relation}
\PPT(A:B) \subseteq \PPT_k(A:B) \subseteq \cdots \subseteq \Rains(A:B).
\end{align}
It is also clear from their definitions that $\PPT_2(A:B) =  {\WD}(A:B)$. This implies
\begin{align}\label{eq: set relation}
    \PPT(A:B) \subseteq \PPT_k(A:B) \subseteq \cdots {\PPT_2}(A:B) \subseteq  {\WD}(A:B)\cap {\Rains}(A:B).
\end{align}
Therefore, the first two inequalities of the asserted result follow from the relation of divergences in Lemma~\ref{thm: comparison of quantum divergence} and the relation of sets in Eq.~\eqref{eq: set relation}. The second inequality follows from the superadditivity in~\cite[Lemma 21]{fang2024generalized} and the asymptotic equivalence in~\cite[Lemma 28]{fang2024generalized}. The third inequality follows from the relation in Eq.~\eqref{eq: set relation}. The last equality is known from Eq.~\eqref{eq: Ec lower bound E eta}. The computability of $D_{\Meas}(\rho\|\PPT_k)$ and $D^\reg(\rho\|\PPT_k)$ follows from Lemma~\ref{lem: PPTk assmuption check} and Theorem~\ref{thm: efficient relative entropy program}.
\end{proof}

Following similar arguments in~\cite{wang2023computable}, we can also show that the new measures $D_{\Meas}(\rho\|\PPT_k)$ and $D^\reg(\rho\|\PPT_k)$ satisfy the desired properties such as normalization, faithfulness and (super-) additivity.

Besides the bounds mentioned, which are established on the PPT set, there is another efficiently computable lower bound on $E_{C,\text{LOCC}}$ given by Lami and Regula in~\cite{lami2023no},
\begin{align}
    E_{\text{LR}}(\rho):= \log\sup\left\{\tr X\rho: \|X^{\sfT_B}\|_\infty \leq 1, \|X\|_\infty = \tr[X \rho]\right\}.
\end{align}
However, this bound also vanishes for full rank states~\cite[Eq. (43)]{regula2022functional}.

In the following, we compare our new bounds with previously established ones through several examples, including Isotropic states and Werner states. To the best of our knowledge, the entanglement costs for these states under LOCC operations remain unresolved. Additionally, we use randomly generated quantum states to showcase the broad applicability and improvement of our bound across unstructured quantum states. In all cases, our experiments clearly demonstrate the superiority of our bound (even for the first level of approximation) over existing ones.

\begin{example}(Isotropic states and Werner states.)
The Isotropic state is defined by a convex mixture of the maximally entangled state and its orthogonal complement,
\begin{align}
    \rho_{I, p} := p \ket{\Phi}\bra{\Phi} + \frac{1-p}{d^2-1} (I - \ket{\Phi}\bra{\Phi}),
\end{align}
where $\ket{\Phi} = \frac{1}{\sqrt{d}}\sum_{i=1}^e \ket{ii}$ is the $d$-dimensional maximally entangled state.
The PPT-relative entropy of entanglement for $\rho_{I,p}$ and its regularization are given by~\cite[Theorem 7]{rains1998improved}
\begin{align}
  D^\reg(\rho_{I,p}\|\PPT) = D(\rho_{I,p}\|\PPT) = \begin{cases}
    0 & \text{if} \quad 0 \leq p \leq \frac{1}{d},\\
    \log d + p \log p + (1-p) \log \frac{1-p}{d-1} & \text{if}\quad \frac{1}{d} \leq p \leq 1.
    \end{cases}
\end{align}
The Werner state is defined by a convex mixture of the normalized projectors on the symmetric ($\rho^s$) and anti-symmetric ($\rho^a$) subspaces,
\begin{align}
    \rho_{W,p}:= (1-p)\rho^s + p \rho^a = \frac{1-p}{d(d+1)} (I + S) + \frac{p}{d(d-1)} (I-S),
\end{align}
where $S = \sum_{i,j=1}^e \ket{ij}\bra{ji}$ is the SWAP operator of dimension $d$. The regularized PPT-relative entropy of entanglement for $\rho_{W,p}$ is given by~\cite{audenaert2001asymptotic}
\begin{align}
D^\reg(\rho_{W,p}\|\PPT) = 
\begin{cases}
0, & \text{if}\quad  0 \leq p \leq \frac{1}{2},\\
1- h(p), & \text{if}\quad \frac{1}{2} < p \leq \frac{d+2}{2d},\\
\log \frac{d+2}{d} + (1-p) \log \frac{d-2}{d+2}, & \text{if}\quad p > \frac{d+2}{2d}.
\end{cases}
\end{align}
Note that both the Isotropic states and the Werner states are full-rank states for any $p \in (0,1)$. The bounds $E_{\WD,1}$, $E_{\WD,2}$, and $E_{\text{LR}}$ all reduce to zero. We then compare our new bound $D_{\Meas}(\rho\|\PPT_2)$ with the previously established bounds $E_{\text{WJZ}}$~\cite{wang2023computable} and the analytical bound $D^\reg(\rho\|\PPT)$ in Figure~\ref{fig: comparison iso werner}. It turns out that the first level of approximation $D_{\Meas}(\rho\|\PPT_2)$ already coincides with the analytical bound $D^\reg(\rho\|\PPT)$ for both cases, improving the numerical bounds $E_{\text{WJZ}}$.

\begin{figure}[H]
    \centering
    \includegraphics[width=\linewidth]{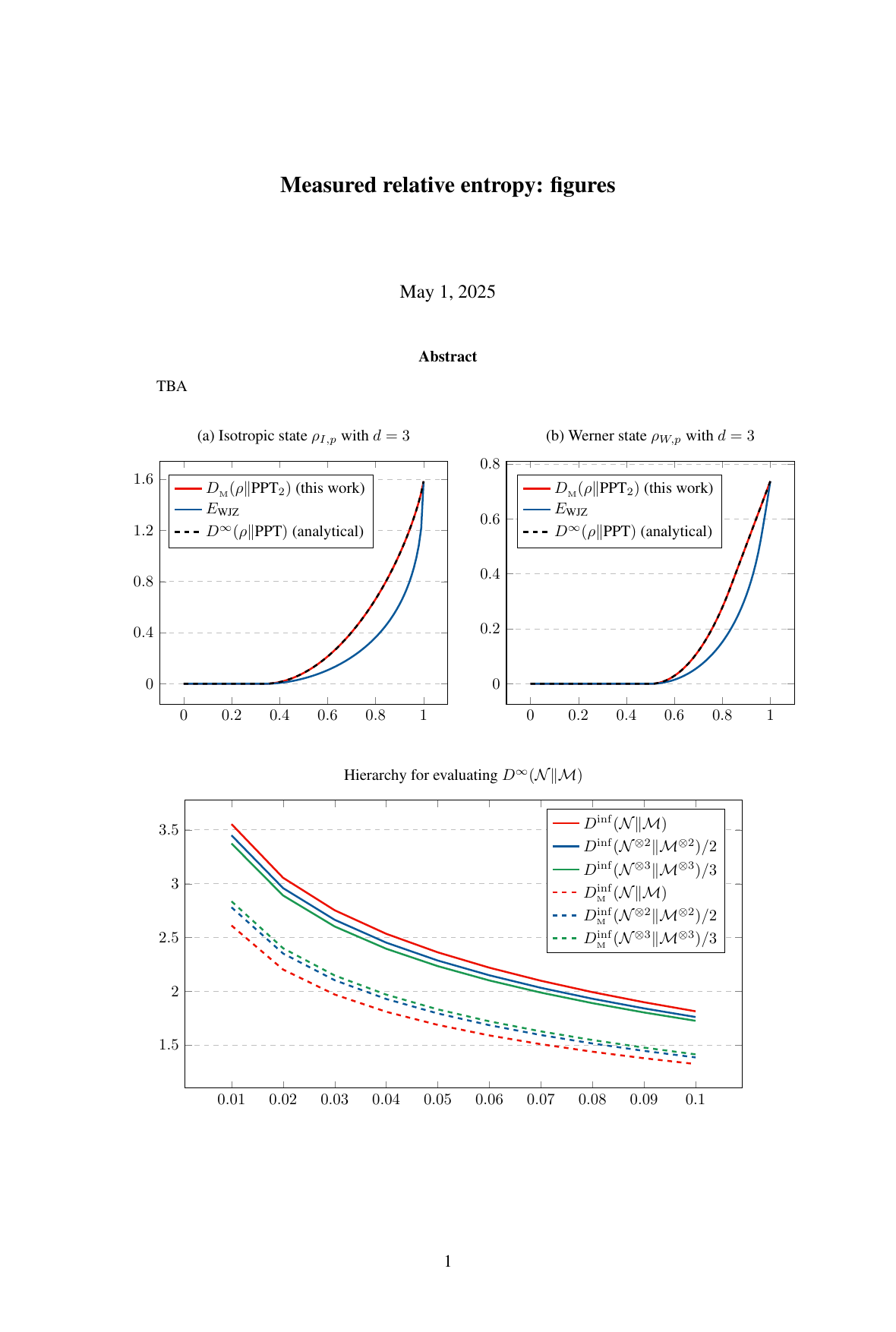}
    \caption{Comparison of the new lower bound $D_{\Meas}(\rho\|\PPT_2)$ with previous bounds $E_{\text{WJZ}}$~\cite{wang2023computable} and $D^\reg(\rho\|\PPT)$~\cite{rains1998improved,audenaert2001asymptotic} for (a) Isotropic states and (b) Werner states. The horizontal axis is the state parameter $p$ and the vertical axis is the value of the entanglement measure.}
    \label{fig: comparison iso werner}
\end{figure}
\end{example}

\begin{example}(Random quantum states.)
Since $D_{\Meas}(\rho\|\PPT_2)$ has been proved to be better than $E_{\text{WJZ}}$ in general, we focus our comparison here with $E_{\text{LR}}$~\cite{lami2023no} by generating random bipartite states according to the Hilbert-Schmidt measure, with varying ranks. For each rank, we generate 500 quantum states of dimension $3 \otimes 3$. The comparison is presented in Figure~\ref{fig:random_compare}. It is evident that $D_{\Meas}(\rho\|\PPT_2)$ outperforms $E_{\text{LR}}$ in most cases, particularly for higher-rank states.

\begin{figure}[H]
    \centering
    \includegraphics[width=12cm]{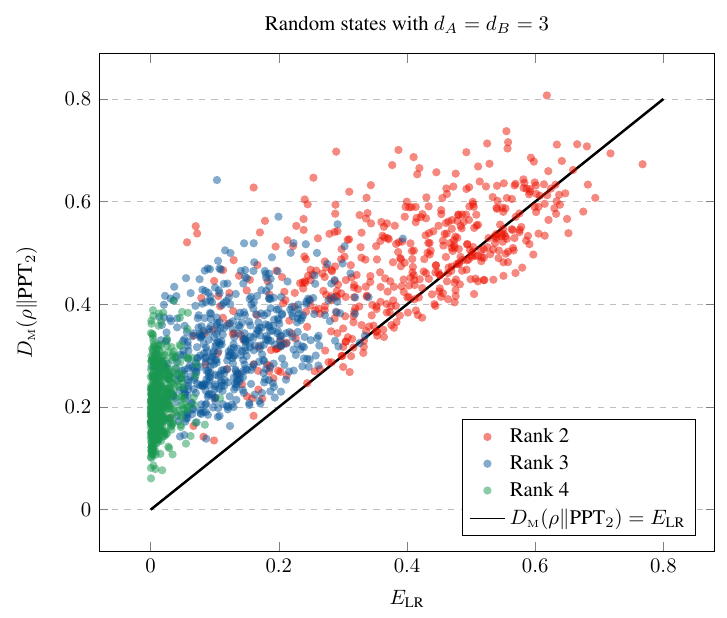}
    \caption{Comparison of the new bound $D_{\Meas}(\rho\|\PPT_2)$ with the previously known bound $E_{\text{LR}}$~\cite{lami2023no} for randomly generated quantum states with different ranks.}
    \label{fig:random_compare}
\end{figure}    
\end{example}

Similar to the entanglement cost for quantum states, the entanglement cost of a quantum channel, denoted by $E_{C,\text{LOCC}}(\cN)$, represents the minimal rate at which entanglement (between the sender and receiver) is required to simulate multiple copies of the channel, given the availability of free classical communication. It is known that~\cite{berta2013entanglement}
\begin{align}
    E_{C,\text{LOCC}}(\cN) \geq \sup_{\rho \in \density(AA')} E_{C,\text{LOCC}}(\cN_{A\to B}(\rho_{AA'})),
\end{align}
where system $\cH_{A'}$ is isomorphic to system $\cH_A$. Wang et al. introduced a lower bound for the entanglement cost of a quantum channel in~\cite{wang2023computable},
\begin{align}
    E_{C,\text{LOCC}}(\cN) \geq E_{\text{WJZ}}(\cN) \geq E_{\text{WJZ}}(\cN_{A\to B}(\Phi_{AA'}))
\end{align}
where $E_{\text{WJZ}}(\cN):= \sup_{\rho \in \density(AA')} E_{\text{WJZ}}(\cN_{A\to B}(\rho_{AA'}))$ and $\Phi_{AA'}$ is the maximally entangled state. This lower bound has been used to demonstrate that the resource theory of entanglement is irreversible for amplitude damping channels. 

Specifically, the amplitude damping channel is defined by
\begin{align}
    \cN_{\rm ad}(\rho) = E_0 \rho E_0^\dagger + E_1 \rho E_1^\dagger,
\end{align}
with Kraus operators $E_0 = \ket{0}\bra{0} + \sqrt{1-\gamma}\ket{1}\bra{1}$ and $E_1 = \sqrt{\gamma}\ket{0}\bra{1}$. Its quantum capacity, i.e., the maximal rate at which entanglement can be generated from the channel, is known as~\cite{giovannetti2005information} 
\begin{align}
    Q(\cN_{\rm ad}) = \max_{p \in [0,1]} h_2((1-\gamma)p) - h_2(\gamma p),
\end{align}
where $h_2$ is the binary entropy. It has been shown in~\cite{wang2023computable} that for $0.25 \lesssim \gamma < 1$,
\begin{align}
    E_{C,\text{LOCC}}(\cN_{\rm ad}) \geq E_{\text{WJZ}}(\cN_{\rm ad}(\Phi_{AA'})) > Q(\cN_{\rm ad}),
\end{align}
Here, we can improve this bound by 
\begin{align}
    E_{C,\text{LOCC}}(\cN_{\rm ad}) \geq D_{\Meas}(\cN_{\rm ad}(\Phi_{AA'})\|\PPT_2) > Q(\cN_{\rm ad})
\end{align}
and show that the gap exists across the entire parameter region $0 < \gamma < 1$ in Figure~\ref{fig:adchannel_irreversibility}.

\begin{figure}[H]
    \centering
    \includegraphics[width=12cm]{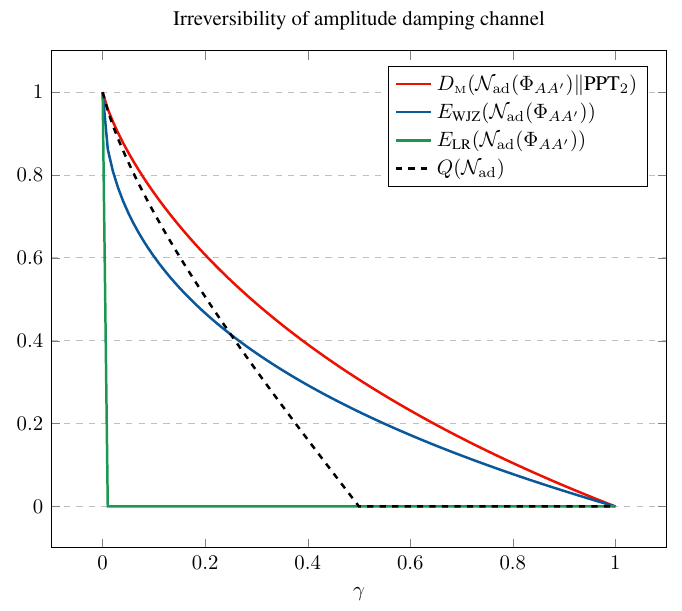}
    \caption{Comparison of the new lower bound $D_{\Meas}(\cN_{\rm ad}(\Phi_{AA'})\|\PPT_2)$ with the previously known bounds $E_{\text{WJZ}}$~\cite{wang2023computable}, $E_{\text{LR}}$~\cite{lami2023no} and the quantum capacity $Q$ for amplitude damping channel. The horizontal axis is the channel parameter $\gamma$ and the vertical axis is the value of the entanglement measure.}
    \label{fig:adchannel_irreversibility}
\end{figure}

\subsection{Quantum entanglement distillation}
\label{sec: Quantum entanglement distillation}

Entanglement distillation is an essential quantum information processing task in quantum networks that involves converting multiple copies of noisy entangled states into a smaller number of Bell states. The \emph{distillable entanglement} of a bipartite state $\rho_{AB}$, denoted by $E_{D,\Omega}(\rho_{AB})$, represents the maximum number of Bell states that can be extracted from the given state with asymptotically vanishing error under the operation class $\Omega$. It has been established that the distillable entanglement under asymptotically non-entanglement generating operations, denoted by $E_{D,\text{\rm ANE}}$, is given by the \emph{regularized relative entropy of entanglement}~\cite{Brand_o_2010},
\begin{align}
    E_{D,\text{\rm ANE}}(\rho_{AB}) = D^\reg(\rho_{AB}\|\SEP):= \lim_{n\to \infty} \frac{1}{n} D(\rho_{AB}^{\ox n}\|\SEP(A^n:B^n))
\end{align}
where $\SEP(A:B)$ denotes the set of all separable states between $\cH_A$ and $\cH_B$. Evaluating this quantity is hard in general, as it involves a limit as well as the separability problem, which is known to be computationally hard~\cite{gurvits2003classical}.

As $D^\reg(\rho_{AB}\|\SEP)$ is a minimization problem, any feasible solution gives an upper bound. Here, we can derive an efficient lower bound for \( D^\reg(\rho_{AB} \| \SEP) \) by relaxing $\SEP$ to
the Rains set~\cite{rains2001semidefinite,audenaert2002asymptotic}~\footnote{The set of PPT states does not satisfy Assumption~\ref{ass: steins lemma assumptions}, see discussion after Eq.~\eqref{eq: ec eppt lower bound}.} 
\begin{align}
\Rains(A:B):= \left\{\sigma \in \PSD(AB): \|\sigma^{\sfT_B}\|_1 \leq 1\right\}.
\end{align}
We can show that it satisfies Assumption~\ref{ass: steins lemma assumptions}. This is because we have, by the SDP duality, that 
\begin{align}
    h_{\Rains}(\omega) = \sup_{\sigma \in \Rains(A:B)} \tr[\omega \sigma] = \inf_{\gamma \geq \omega} \|\gamma^{\sfT_B}\|_{\infty},
\end{align}
where $\|\cdot\|_\infty$ is the spectral norm. By the multiplicativity of $\|\cdot\|_\infty$ we can easily check that $h_{\Rains}$ is sub-multiplicative under tensor product, which is equivalent to the polar assumption (A.4) by Lemma~\ref{lema: polar set and support function}. The rest of assumptions can also be easily verified.

Therefore, we have the relaxation
\begin{align}
     D^\reg(\rho_{AB}\|\SEP) \geq D^\reg(\rho_{AB}\|\Rains),
\end{align}
where the right-hand side known as the regularized Rains bound can be efficiently estimated using Theorem~\ref{thm: efficient relative entropy program} by considering $\sA_n = \{\rho^{\ox n}\}$ and $\sB_n = \Rains(A^n:B^n)$ which has efficient SDP representations~\cite{fang2019non}.

\begin{remark}
    Similar to the regularized relative entropy of entanglement, the regularized Rains bound has the operational meaning~\cite{regula2019one} that it is the distillable entanglement under Rains-preserving operations $E_{D,\text{Rains}}$, that is,
\begin{align}
    E_{D,\text{Rains}}(\rho_{AB}) = D^\reg(\rho_{AB}\|\Rains).
\end{align}
    This marks the first time that an operational regularized entanglement measure has been shown to be efficiently computable, even when expressed as a regularized formula and beyond the zero-error setting. Previous work in~\cite{lami2024computable} studied the zero-error entanglement cost under PPT operations and proved that it is efficiently computable despite the absence of a closed-form formula.
\end{remark}

As the Rains-preserving operations is a superset of LOCC operations, the regularized Rains bound also gives an upper bound on the distillable entanglement under LOCC operations $E_{D,\text{\rm LOCC}}$, that is,
\begin{align}
    E_{D,\text{\rm LOCC}}(\rho_{AB}) \leq D^\reg(\rho_{AB}\|\Rains) \leq E_{D,\text{\rm ANE}}(\rho_{AB}).
\end{align}
This improves the best known efficiently computable bound for $E_{D,\text{\rm LOCC}}$ as well.

\subsection{Magic state distillation}
\label{sec: Magic state distillation}

The above argument for entanglement distillation also applies to magic states, which is a key resource for fault-tolerant quantum computing~\cite{Bravyi_2005,veitch2014resource,fang2020no}. The task of magic state distillation aims to extract as many copies of the target magic state as possible with asymptotically vanishing error. The distillable magic is denoted by $M_{D,\Omega}$, where $\Omega$ represents the set of allowed operations. Typically, the most natural choice of operations involves stabilizer operations, and the corresponding distillable magic is denoted by $M_{D,\text{STAB}}$. However, characterizing this set of operations is challenging.

Motivated by the idea of the Rains bound from entanglement theory, Wang et al.~\cite{Wang2018magicstates} relaxed the set of all stabilizer states to a set of sub-normalized states with non-positive mana, $\cW(\cH):=\{\sigma \in \PSD(\cH): \|\sigma\|_{W,1} \leq 1\}$, where $\|\cdot\|_{W,1}$ denotes the Wigner trace norm. Based on the set $\cW$, Wang et al. introduced a magic measure $D(\rho\|\cW)$, called thauma, and proved that it serves as an upper bound for the distillable magic under stabilizer operations, that is,
\begin{align}
    M_{D,\text{STAB}}(\rho) \leq D(\rho\|\cW) c(T),
\end{align}
where $c(T)$ is a constant that depends on the target magic state $T$.

Here, we can verify that $\cW$ satisfies our Assumption~\ref{ass: steins lemma assumptions} and apply Theorem~\ref{thm: efficient relative entropy program} to obtain an improved bound through regularization while keeping the computational efficiency.

To see this, it is straightforward to show that the support function of $\cW$ is given by
\begin{align}
    h_{\cW}(\omega) = \sup_{\sigma \in \cW} \tr[\omega \sigma] = \inf_{\gamma \geq \omega} \|\gamma\|_{W,\infty},
\end{align}
where the $\|\cdot\|_{W,\infty}$ is the Wigner spectral norm and the second equality follows from the SDP duality. Since $\|\cdot\|_{W,\infty}$ is multiplicative under tensor product, we can verify that the support function $h_{\cW}$ is sub-multiplicative under tensor product as well. Hence, the polar set $\cW^{\circ}$ is closed under tensor product by Lemma~\ref{lema: polar set and support function}, and the remaining assumptions in Assumption~\ref{ass: steins lemma assumptions} can also be verified. Then, we can consider the regularization and get
\begin{align}
    M_{D,\text{STAB}}(\rho) \leq D^\reg(\rho\|\cW) c(T),
\end{align}
where $D^\reg(\rho\|\cW)$ is the reguarlized thauma which remains efficiently computable by applying Theorem~\ref{thm: efficient relative entropy program} with $\sA_n = \{\rho^{\ox n}\}$ and $\sB_n = \cW(\cH^{\ox n})$. This improves the best-known estimation of magic state distillation under stablizer opeations.

\section{Conclusion}
\label{sec: conclusion}

We showed that regularized relative entropy between two sets of quantum states can be efficiently computed using convex optimization. This result has broad implications for quantum information theory, including the study of adversarial quantum channel discrimination, the estimation of the entanglement cost of quantum states and channels, entanglement distillation under LOCC operations and magic state distillation under stabilizer operations. Numerical experiments demonstrated that our new bounds outperform existing ones in various scenarios, even for the first level of approximation. Generally, our result can be applied by verifying the conditions of the relevant theory and performing necessary relaxations when required. Therefore, we anticipate that this approach has the potential for far-reaching applications beyond the specific cases discussed here.

Many problems remain open for future investigation. For example, while we have demonstrated that regularized relative entropies can be efficiently computed using convex optimization techniques, developing a more explicit algorithm and its implementation remains an area for further exploration. Additionally, designing a general algorithm to construct the smallest superset of a given set that satisfies the polar assumption presents an intriguing challenge. The solution of this would extend the applicability of our results to broader areas.

\vspace{1cm}

\noindent \textbf{Acknowledgements.} 
K.F. thanks Xin Wang for the updates on the $\PPT_k$ sets, and Ludovico Lami and Bartosz Regula for pointing out that $E_{\rm LR}$ vanishes for any full-rank state.
K.F. is supported by the National Natural Science Foundation of China (grant No. 92470113 and 12404569), the Shenzhen Science and Technology Program (grant No. JCYJ20240813113519025), the Shenzhen Fundamental Research Program (grant No. JCYJ20241202124023031), the 1+1+1 CUHK-CUHK(SZ)-GDST Joint Collaboration Fund (grant No. GRD\ P2025-022), and the University Development Fund (grant No. UDF01003565). H.F. was partially funded by UK Research and Innovation (UKRI) under the UK government’s Horizon Europe funding guarantee EP/X032051/1.
O.F. acknowledges support by the European Research Council (ERC Grant AlgoQIP, Agreement No. 851716), by the European Union’s Horizon 2020 research and innovation programme under Grant Agreement No 101017733 (VERIqTAS).

\bibliographystyle{alpha_abbrv}
\bibliography{Bib}

\newcommand{\etalchar}[1]{$^{#1}$}
\begin{thebibliography}{ADMVW02}

\bibitem[ADMVW02]{audenaert2002asymptotic}
K.~Audenaert, B.~De~Moor, K.~G.~H. Vollbrecht, and R.~F. Werner.
\newblock Asymptotic relative entropy of entanglement for orthogonally invariant states.
\newblock {\em Physical Review A}, 66(3):032310, 2002.

\bibitem[AEJ{\etalchar{+}}01]{audenaert2001asymptotic}
K.~Audenaert, J.~Eisert, E.~Jane, M.~B. Plenio, S.~Virmani, and B.~De~Moor.
\newblock Asymptotic relative entropy of entanglement.
\newblock {\em Physical Review Letters}, 87(21):217902, 2001.

\bibitem[BBCW13]{berta2013entanglement}
M.~Berta, F.~G. Brandão, M.~Christandl, and S.~Wehner.
\newblock Entanglement cost of quantum channels.
\newblock {\em IEEE Transactions on Information Theory}, 59(10):6779--6795, 2013.

\bibitem[BBH21]{berta2021composite}
M.~Berta, F.~G. Brandão, and C.~Hirche.
\newblock On composite quantum hypothesis testing.
\newblock {\em Communications in Mathematical Physics}, 385(1):55--77, 2021.

\bibitem[BFT17]{Berta2017}
M.~Berta, O.~Fawzi, and M.~Tomamichel.
\newblock On variational expressions for quantum relative entropies.
\newblock {\em Letters in Mathematical Physics}, 107(12):2239--2265, 2017.

\bibitem[BHLP20]{brandao2020adversarial}
F.~G. Brandão, A.~W. Harrow, J.~R. Lee, and Y.~Peres.
\newblock Adversarial hypothesis testing and a quantum {Stein’s} lemma for restricted measurements.
\newblock {\em IEEE Transactions on Information Theory}, 66(8):5037--5054, 2020.

\bibitem[BK05]{Bravyi_2005}
S.~Bravyi and A.~Kitaev.
\newblock Universal quantum computation with ideal clifford gates and noisy ancillas.
\newblock {\em Physical Review A}, 71(2), February 2005.

\bibitem[BP10]{Brand_o_2010}
F.~G. S.~L. Brandão and M.~B. Plenio.
\newblock A generalization of quantum stein’s lemma.
\newblock {\em Communications in Mathematical Physics}, 295(3):791–828, February 2010.

\bibitem[BTN01]{ben2001lectures}
A.~Ben-Tal and A.~Nemirovski.
\newblock {\em Lectures on modern convex optimization: analysis, algorithms, and engineering applications}.
\newblock SIAM, 2001.

\bibitem[BWP{\etalchar{+}}17]{biamonte2017quantum}
J.~Biamonte, P.~Wittek, N.~Pancotti, P.~Rebentrost, N.~Wiebe, and S.~Lloyd.
\newblock Quantum machine learning.
\newblock {\em Nature}, 549(7671):195--202, 2017.

\bibitem[CC97]{cerf1997information}
N.~J. Cerf and R.~Cleve.
\newblock Information-theoretic interpretation of quantum error-correcting codes.
\newblock {\em Physical Review A}, 56(3):1721, 1997.

\bibitem[CG19]{chitambar2019quantum}
E.~Chitambar and G.~Gour.
\newblock Quantum resource theories.
\newblock {\em Reviews of Modern Physics}, 91(2):025001, 2019.

\bibitem[Cov99]{cover1999elements}
T.~M. Cover.
\newblock {\em Elements of information theory}.
\newblock John Wiley \& Sons, 1999.

\bibitem[Cro99]{crooks1999entropy}
G.~E. Crooks.
\newblock Entropy production fluctuation theorem and the nonequilibrium work relation for free energy differences.
\newblock {\em Physical Review E}, 60(3):2721, 1999.

\bibitem[CS17]{chandrasekaran2017relative}
V.~Chandrasekaran and P.~Shah.
\newblock Relative entropy optimization and its applications.
\newblock {\em Mathematical Programming}, 161:1--32, 2017.

\bibitem[CW04]{christandl2004squashed}
M.~Christandl and A.~Winter.
\newblock {“Squashed entanglement”: an additive entanglement measure}.
\newblock {\em Journal of Mathematical Physics}, 45(3):829--840, 2004.

\bibitem[Dat09]{datta2009min}
N.~Datta.
\newblock Min-and max-relative entropies and a new entanglement monotone.
\newblock {\em IEEE Transactions on Information Theory}, 55(6):2816--2826, 2009.

\bibitem[DFR20]{dupuis2020entropy}
F.~Dupuis, O.~Fawzi, and R.~Renner.
\newblock Entropy accumulation.
\newblock {\em Communications in Mathematical Physics}, 379(3):867--913, 2020.

\bibitem[DFW{\etalchar{+}}18]{diaz2018using}
M.~G. D{\'\i}az, K.~Fang, X.~Wang, M.~Rosati, M.~Skotiniotis, J.~Calsamiglia, and A.~Winter.
\newblock Using and reusing coherence to realize quantum processes.
\newblock {\em Quantum}, 2:100, 2018.

\bibitem[Don86]{donald1986relative}
M.~J. Donald.
\newblock On the relative entropy.
\newblock {\em Communications in Mathematical Physics}, 105:13--34, 1986.

\bibitem[FFF24]{fang2024generalized}
K.~Fang, H.~Fawzi, and O.~Fawzi.
\newblock Generalized quantum asymptotic equipartition.
\newblock {\em arXiv preprint arXiv:2411.04035}, 2024.

\bibitem[FFF25]{fang2025adversarial}
K.~Fang, H.~Fawzi, and O.~Fawzi.
\newblock Adversarial quantum channel discrimination.
\newblock {\em arXiv preprint arXiv:2506.03060}, 2025.

\bibitem[FFRS20]{fang2020chain}
K.~Fang, O.~Fawzi, R.~Renner, and D.~Sutter.
\newblock Chain rule for the quantum relative entropy.
\newblock {\em Physical Review Letters}, 124(10):100501, 2020.

\bibitem[FGW21]{fang2021ultimate}
K.~Fang, G.~Gour, and X.~Wang.
\newblock Towards the ultimate limits of quantum channel discrimination.
\newblock {\em arXiv preprint arXiv:2110.14842v2}, 2021.

\bibitem[FL20]{fang2020no}
K.~Fang and Z.-W. Liu.
\newblock No-go theorems for quantum resource purification.
\newblock {\em Physical Review Letters}, 125(6):060405, 2020.

\bibitem[FL22]{fang2022no}
K.~Fang and Z.-W. Liu.
\newblock No-go theorems for quantum resource purification: New approach and channel theory.
\newblock {\em PRX Quantum}, 3(1):010337, 2022.

\bibitem[FS23]{fawzi2023optimal}
H.~Fawzi and J.~Saunderson.
\newblock Optimal self-concordant barriers for quantum relative entropies.
\newblock {\em SIAM Journal on Optimization}, 33(4):2858--2884, 2023.

\bibitem[FST22]{Fawzi_2022}
O.~Fawzi, A.~Shayeghi, and H.~Ta.
\newblock A hierarchy of efficient bounds on quantum capacities exploiting symmetry.
\newblock {\em IEEE Transactions on Information Theory}, 68(11):7346–7360, November 2022.

\bibitem[FWTD19]{fang2019non}
K.~Fang, X.~Wang, M.~Tomamichel, and R.~Duan.
\newblock Non-asymptotic entanglement distillation.
\newblock {\em IEEE Transactions on Information Theory}, 65(10):6454--6465, 2019.

\bibitem[FZL{\etalchar{+}}23]{fang2023quantum}
K.~Fang, J.~Zhao, X.~Li, Y.~Li, and R.~Duan.
\newblock {Quantum NETwork: from theory to practice}.
\newblock {\em Science China Information Sciences}, 66(8):180509, 2023.

\bibitem[GF05]{giovannetti2005information}
V.~Giovannetti and R.~Fazio.
\newblock Information-capacity description of spin-chain correlations.
\newblock {\em Physical Review A—Atomic, Molecular, and Optical Physics}, 71(3):032314, 2005.

\bibitem[GPT13]{gouveia2013lifts}
J.~Gouveia, P.~A. Parrilo, and R.~R. Thomas.
\newblock Lifts of convex sets and cone factorizations.
\newblock {\em Mathematics of Operations Research}, 38(2):248--264, 2013.

\bibitem[Gra11]{gray2011entropy}
R.~M. Gray.
\newblock {\em Entropy and information theory}.
\newblock Springer Science \& Business Media, 2011.

\bibitem[Gur03]{gurvits2003classical}
L.~Gurvits.
\newblock Classical deterministic complexity of edmonds' problem and quantum entanglement.
\newblock In {\em Proceedings of the thirty-fifth annual ACM symposium on Theory of computing}, pages 10--19, 2003.

\bibitem[Hay17]{hayashi2017quantum}
M.~Hayashi.
\newblock Quantum information theory.
\newblock {\em Graduate Texts in Physics, Springer}, 2017.

\bibitem[HFW21]{hayashi2021finite}
M.~Hayashi, K.~Fang, and K.~Wang.
\newblock Finite block length analysis on quantum coherence distillation and incoherent randomness extraction.
\newblock {\em IEEE Transactions on Information Theory}, 67(6):3926--3944, 2021.

\bibitem[HP91]{hiai1991proper}
F.~Hiai and D.~Petz.
\newblock The proper formula for relative entropy and its asymptotics in quantum probability.
\newblock {\em Communications in Mathematical Physics}, 143:99--114, 1991.

\bibitem[HSF24a]{he2024exploiting}
K.~He, J.~Saunderson, and H.~Fawzi.
\newblock Exploiting structure in quantum relative entropy programs.
\newblock {\em arXiv preprint arXiv:2407.00241}, 2024.

\bibitem[HSF24b]{he2024qics}
K.~He, J.~Saunderson, and H.~Fawzi.
\newblock {QICS: Quantum Information Conic Solver}.
\newblock {\em arXiv preprint arXiv:2410.17803}, 2024.

\bibitem[Hua14]{Huang2014}
Y.~Huang.
\newblock {Computing quantum discord is NP-complete}.
\newblock {\em New Journal of Physics}, 16(3):33027, 2014.

\bibitem[HY24]{hayashi2024generalized}
M.~Hayashi and H.~Yamasaki.
\newblock Generalized quantum {Stein's} lemma and second law of quantum resource theories.
\newblock {\em arXiv preprint arXiv:2408.02722}, 2024.

\bibitem[Jay57]{jaynes1957information}
E.~T. Jaynes.
\newblock Information theory and statistical mechanics.
\newblock {\em Physical Review}, 106(4):620, 1957.

\bibitem[KL51]{kullback1951information}
S.~Kullback and R.~A. Leibler.
\newblock On information and sufficiency.
\newblock {\em The Annals of Mathematical Statistics}, 22(1):79--86, 1951.

\bibitem[Lam24]{lami2024solutiongeneralisedquantumsteins}
L.~Lami.
\newblock A solution of the generalised quantum {Stein's} lemma.
\newblock {\em arXiv preprint arXiv: 2408.06410}, 2024.

\bibitem[LLS{\etalchar{+}}23]{Leditzky_2023}
F.~Leditzky, D.~Leung, V.~Siddhu, G.~Smith, and J.~A. Smolin.
\newblock The platypus of the quantum channel zoo.
\newblock {\em IEEE Transactions on Information Theory}, 69(6):3825–3849, June 2023.

\bibitem[LMR24]{lami2024computable}
L.~Lami, F.~A. Mele, and B.~Regula.
\newblock Computable entanglement cost.
\newblock {\em arXiv preprint arXiv:2405.09613}, 2024.

\bibitem[LMR25]{Lami2025}
L.~Lami, F.~A. Mele, and B.~Regula.
\newblock Computable entanglement cost under positive partial transpose operations.
\newblock {\em Phys. Rev. Lett.}, 134:090202, Mar 2025.

\bibitem[LPS17]{litjens2017semidefinite}
B.~Litjens, S.~Polak, and A.~Schrijver.
\newblock Semidefinite bounds for nonbinary codes based on quadruples.
\newblock {\em Designs, Codes and Cryptography}, 84:87--100, 2017.

\bibitem[LR23]{lami2023no}
L.~Lami and B.~Regula.
\newblock No second law of entanglement manipulation after all.
\newblock {\em Nature Physics}, 19(2):184--189, 2023.

\bibitem[MFSR22]{metger2022generalised}
T.~Metger, O.~Fawzi, D.~Sutter, and R.~Renner.
\newblock Generalised entropy accumulation.
\newblock In {\em 2022 IEEE 63rd Annual Symposium on Foundations of Computer Science (FOCS)}, pages 844--850. IEEE, 2022.

\bibitem[MSW22]{Mosonyi_2022}
M.~Mosonyi, Z.~Szilagyi, and M.~Weiner.
\newblock On the error exponents of binary state discrimination with composite hypotheses.
\newblock {\em IEEE Transactions on Information Theory}, 68(2):1032–1067, February 2022.

\bibitem[Mur12]{murphy2012machine}
K.~P. Murphy.
\newblock {\em Machine learning: a probabilistic perspective}.
\newblock MIT Press, 2012.

\bibitem[Net12]{netzer2012spectrahedra}
T.~Netzer.
\newblock Spectrahedra and their shadows, 2012.
\newblock Habilitation thesis, available at \url{https://www.uibk.ac.at/mathematik/algebra/media/tim-netzer/habilitationsschrift.pdf}.

\bibitem[NO00]{Ogawa2000}
H.~Nagaoka and T.~Ogawa.
\newblock {Strong converse and {Stein's} lemma in quantum hypothesis testing}.
\newblock {\em IEEE Transactions on Information Theory}, 46(7):2428--2433, feb 2000.

\bibitem[PAB{\etalchar{+}}20]{pirandola2020advances}
S.~Pirandola, U.~L. Andersen, L.~Banchi, M.~Berta, D.~Bunandar, R.~Colbeck, D.~Englund, T.~Gehring, C.~Lupo, C.~Ottaviani, et~al.
\newblock Advances in quantum cryptography.
\newblock {\em Advances in Optics and Photonics}, 12(4):1012--1236, 2020.

\bibitem[Rai98]{rains1998improved}
E.~Rains.
\newblock An improved bound on distillable entanglement.
\newblock {\em arXiv preprint quant-ph/9809082}, 1998.

\bibitem[Rai01]{rains2001semidefinite}
E.~M. Rains.
\newblock A semidefinite program for distillable entanglement.
\newblock {\em IEEE Transactions on Information Theory}, 47(7):2921--2933, 2001.

\bibitem[Ren05]{renner2005security}
R.~Renner.
\newblock {\em Security of QKD}.
\newblock PhD thesis, PhD thesis, ETH, 2005, quant-ph/0512258, 2005.

\bibitem[RFWG19]{regula2019one}
B.~Regula, K.~Fang, X.~Wang, and M.~Gu.
\newblock One-shot entanglement distillation beyond local operations and classical communication.
\newblock {\em New Journal of Physics}, 21(10):103017, 2019.

\bibitem[RL22]{regula2022functional}
B.~Regula and L.~Lami.
\newblock Functional analytic insights into irreversibility of quantum resources.
\newblock {\em arXiv preprint arXiv:2211.15678}, 2022.

\bibitem[SW02]{schumacher2002relative}
B.~Schumacher and M.~D. Westmoreland.
\newblock Relative entropy in quantum information theory.
\newblock {\em Contemporary Mathematics}, 305:265--290, 2002.

\bibitem[Ume54]{umegaki1954conditional}
H.~Umegaki.
\newblock Conditional expectation in an operator algebra.
\newblock {\em Tohoku Mathematical Journal, Second Series}, 6(2-3):177--181, 1954.

\bibitem[Ume62]{umegaki1962conditional}
H.~Umegaki.
\newblock Conditional expectation in an operator algebra, {IV} (entropy and information).
\newblock In {\em Kodai Mathematical Seminar Reports}, volume~14, pages 59--85. Department of Mathematics, Tokyo Institute of Technology, 1962.

\bibitem[Ved02]{vedral2002role}
V.~Vedral.
\newblock The role of relative entropy in quantum information theory.
\newblock {\em Reviews of Modern Physics}, 74(1):197, 2002.

\bibitem[VMGE14]{veitch2014resource}
V.~Veitch, S.~H. Mousavian, D.~Gottesman, and J.~Emerson.
\newblock The resource theory of stabilizer quantum computation.
\newblock {\em New Journal of Physics}, 16(1):013009, 2014.

\bibitem[Wat18]{watrous2018theory}
J.~Watrous.
\newblock {\em The theory of quantum information}.
\newblock Cambridge University Press, 2018.

\bibitem[WD17a]{Wang2017irreversibility}
X.~Wang and R.~Duan.
\newblock {Irreversibility of Asymptotic Entanglement Manipulation under Quantum Operations Completely Preserving Positivity of Partial Transpose}.
\newblock {\em Physical Review Letters}, 119(18):1--7, 2017.

\bibitem[WD17b]{Wang2017}
X.~Wang and R.~Duan.
\newblock {Nonadditivity of Rains' bound for distillable entanglement}.
\newblock {\em Physical Review A}, 95(6):1--5, 2017.

\bibitem[WJZ23]{wang2023computable}
X.~Wang, M.~Jing, and C.~Zhu.
\newblock Computable and faithful lower bound for entanglement cost.
\newblock {\em arXiv preprint arXiv:2311.10649v3}, 2023.

\bibitem[WWS20]{Wang2018magicstates}
X.~Wang, M.~M. Wilde, and Y.~Su.
\newblock Efficiently computable bounds for magic state distillation.
\newblock {\em Physical Review Letters}, 124(9):090505, 2020.

\end{thebibliography}

\end{document}